\documentclass[journal]{IEEEtran}
\IEEEoverridecommandlockouts
\makeatletter
\def\ps@headings{%
\def\@oddhead{\mbox{}\scriptsize\rightmark \hfil \thepage}%
\def\@evenhead{\scriptsize\thepage \hfil \leftmark\mbox{}}%
\def\@oddfoot{}%
\def\@evenfoot{}}
\makeatother
\usepackage{subfig}
\usepackage{colortbl}
\usepackage{bm}

\usepackage{graphicx}  
\usepackage{url}       

\usepackage{amsmath}   
\usepackage{cite}

\usepackage{amsfonts,amssymb}

\usepackage{stfloats}

\usepackage{cases}
\usepackage{algorithm}
\usepackage{multirow}
\usepackage{algorithmic}
\usepackage{stfloats}
\usepackage{epstopdf}
\usepackage{esint}
\usepackage{mathtools}

\usepackage{xcolor}
\usepackage{xpatch}
\makeatletter
\def\changeBibColor#1{%
	\in@{#1}{}
	\ifin@\else\normalcolor\fi
}
\xpatchcmd\@bibitem
{\item}
{\changeBibColor{#1}\item}
{}{\fail}
\xpatchcmd\@lbibitem
{\item}
{\changeBibColor{#2}\item}
{}{\fail}

\newcommand{\Rmnum}[1]{\expandafter\@slowromancap\romannumeral #1@}

\makeatother

\newtheorem{proposition}{{Proposition}}

\newtheorem{remark}{{Remark}}

\newcommand{\ls}[1]
    {\dimen0=\fontdimen6\the\font
     \lineskip=#1\dimen0
     \advance\lineskip.5\fontdimen5\the\font
     \advance\lineskip-\dimen0
     \lineskiplimit=.9\lineskip
     \baselineskip=\lineskip
     \advance\baselineskip\dimen0
     \normallineskip\lineskip
     \normallineskiplimit\lineskiplimit
     \normalbaselineskip\baselineskip
     \ignorespaces
    }

\begin{document}

\title{Unified Analytical Framework for Emergency RIS-UAV Networks Under Practical Impairments}
\vspace{10pt}

\author{
\IEEEauthorblockN{Yinong Chen, \emph{Graduate Student Member}, \emph{IEEE}, 
Wenchi Cheng, \emph{Senior Member}, \emph{IEEE}, Jingqing Wang, \emph{Member}, \emph{IEEE}, 
and Jiangzhou Wang, \emph{Fellow}, \emph{IEEE}}\vspace{-10pt}


\thanks{
This work was supported in part by the National Key R\&D Program of China under Grant 2024YFC3016000.

Yinong Chen, Wenchi Cheng, and Jingqing Wang are with the
State Key Laboratory of Integrated Services Networks, Xidian University, Xi'an, 710071, China (e-mails:  yinongchen@stu.xidian.edu.cn; wccheng@xidian.edu.cn; jqwangxd@xidian.edu.cn).

Jiangzhou Wang is with the School of Information Science and Engineering, Southeast University, and Purple Mountain Laboratories, Nanjing 211119, China (e-mail: j.z.wang@kent.ac.uk).
}
}

\maketitle

\begin{abstract}
Heterogeneous unmanned aerial vehicle (UAV) networks embedded with reconfigurable intelligent surfaces (RISs) present a promising paradigm for emergency wireless communications (EWC), offering enhanced coverage and resilience in harsh environments. However, extreme conditions in disaster areas necessitate robust performance evaluation under practical impairments, including outdated/imperfect channel state information (CSI) and discrete RIS phase shifts. Existing works lack a unified analytical framework for modeling CSI errors, employing inconsistent approaches that treat errors either as channel gain or as equivalent interference, leading to ambiguous benchmarks. To address this, we propose the $\zeta$-Model, a unified receiver-equivalent signal-to-noise (SNR) framework that continuously parameterizes residual-error exploitability via $\zeta$. This framework unifies the information-theoretic model (ITM) and the engineering baseline model (EBM) as the optimistic and pessimistic benchmark receiver treatments, while incorporating the simplified engineering model (SEM) as a tractable approximation. By employing the Fisher-Snedecor $\mathcal{F}$ distribution to capture severe fading and shadowing, we derive moment-matching-based closed-form or finite-sum approximate expressions and asymptotic expressions for average capacity (AC), effective capacity (EC), and outage probability (OP) under the proposed unified framework and its boundary cases. Validated by Monte Carlo simulations, our framework quantifies performance limits and provides crucial insights for designing robust and efficient EWC systems under various channel conditions and system impairments.

\end{abstract}

\vspace{5pt}

\begin{IEEEkeywords}
    UAV, RIS, emergency communications,  performance analysis, Fisher-Snedecor $\mathcal{F}$ fading,  imperfect CSI.
\end{IEEEkeywords}

\section{Introduction}
\IEEEPARstart{N}{atural} disasters, such as earthquakes and hurricanes, pose severe threats to public safety by devastating terrestrial infrastructures and disrupting communication links. Therefore, rapidly reestablishing connectivity is crucial for effective coordination and rescue operations\cite{EWC2,EWCsurvey}. 
In post-disaster areas, collapsed buildings and debris create a complex propagation environment with severe multi-path fading and shadowing, making  the traditional approach of employing ground vehicles as temporary base stations (BSs) unable to meet the dynamic and reliable information acquisition requirements of emergency wireless communication (EWC)\cite{Yao}.  

In recent years, towards the advances in miniaturization, low cost, and autonomous control, unmanned aerial vehicles (UAVs) have been regarded as efficient techniques for establishing EWC networks\cite{UAVNetwork}. Generally, high-altitude UAVs serve as aerial BSs for wide coverage due to their endurance and payload capacity, while low-altitude rotary-wing UAVs operate as relays for specific communication demands through cooperative scheduling\cite{UAVsurvey}. Nevertheless, the limited battery life of the low-altitude relays, constrained by on-board RF circuits, presents a significant bottleneck for the overall networks. 
To overcome this issue,  reconfigurable intelligent surface (RIS)-equipped UAVs (R-UAVs) have emerged as a promising and energy-efficient alternative\cite{UAV_RIS3}. 
By manipulating incident signals via programmable phase shifts of RIS elements, R-UAVs can obviate the need for power-intensive RF chains, thereby enhancing network sustainability in emergency scenarios \cite{cao-RIS,cheng2026,UAV_RIS1,UAV_RIS2}. 
For example, the authors in \cite{UAV_RIS1} maximized throughput by jointly optimizing the UAV trajectory, RIS passive beamforming, and source power allocation for each time slot. The authors in \cite{UAV_RIS2} proposed an R-UAV-assisted approach to maximize the secrecy rate while ensuring secure quality of service (QoS)-aware communications. 

To ensure the resilience and stability of the system, the heterogeneous architecture, comprising multiple UAV types as different layers, is preferable in a cooperative manner to provide seamless connectivity in disaster areas. For example, the authors in \cite{heterogeneous_UAV1} proposed a joint optimization framework to optimize overall energy efficiency, subchannel allocation, transmit power, and multi-UAV trajectories for the multi-cell UAV heterogeneous network and sidelink with QoS constraint. The authors in \cite{heterogeneous_UAV2} investigated the cache-enabled UAV emergency communication networks using stochastic geometry approaches. Such multi-tier heterogeneous UAV-assisted network architectures have been recognized as effective paradigms for ensuring resilient and scalable EWC\cite{multiUAV}. 

However, the practical deployment of the above RIS-assisted aerial systems is severely constrained by unavoidable channel uncertainties and hardware imperfections. First, the performance of RIS-aided transmission critically relies on accurate phase configuration, which in turn requires reliable channel state information (CSI). Yet, due to the passive architecture of RISs, cascaded channel estimation incurs considerable training and signaling overhead, inevitably leading to CSI estimation errors \cite{RISperfectCSI,RISestimate}. Second, because of the mobility of aerial BSs as well as processing and transmission delays, the CSI available at the BS may become outdated before data transmission, which can cause a significant performance loss \cite{outdatedCSI1,outdatedCSI2}. Third, continuously tunable phase control is generally impractical at RISs, since their digital control circuits only support discrete phase shifts, thereby introducing residual phase quantization errors \cite{discrete3}. 
In addition, deploying R-UAVs in emergency scenarios faces practical constraints such as payload/weight limitations, onboard power consumption, flight endurance, RIS control overhead, and UAV coordination signaling\cite{EWCsurvey}. These deployment-level constraints further motivate the need for analytical performance expressions that can serve as efficient evaluators in future optimization problems involving power, reliability, RIS configuration, and UAV placement.

Several works have investigated system performance under these practical constraints. For discrete phase quantization, the authors in \cite{discrete3} provided active-RIS-assisted NOMA-ISAC analysis, which modeled the RIS phase error as a uniformly distributed mismatch and characterized the resulting attenuation loss, while the authors in \cite{discretephase} likewise adopted uniform quantization for discrete phase shift for multi-RIS statistical characterization. 
For imperfect and outdated CSI, the authors in \cite{model2s-1} analyzed the average capacity (AC) over Rician fading under outdated CSI, and demonstrated the distinct scenarios where distributed RIS-aided deployment and centralized RIS-aided deployment respectively excel in performance, while the authors in \cite{model2s-2} investigated the impact of outdated CSI on the effective capacity (EC) of RIS-enabled mmWave systems and optimized the RIS and protocol configuration to reduce the signaling overhead. The authors in \cite{model1} analyzed the BER, throughput, and diversity order of a multi-RIS LoRa network over Nakagami-$m$ fading with both imperfect and outdated CSI. 
To further consider dual limitations, 
the authors in \cite{imperfectmodel1} designed robust binary-state selection strategies to maximize the worst-case energy efficiency (EE) under imperfect CSI for both continuous and discrete phase settings, whereas the authors in \cite{imperfectoutmodel1} considered multi-RIS selection strategy and derived the average coverage probability and BER under outdated and imperfect CSI with line-of-sight (LoS) channels. The authors in \cite{Eq1imperfect} studied the outage probability (OP) and AC of active RIS-assisted MISO systems under both imperfect CSI and discrete phase quantization over Nakagami-$m$ fading.


Despite these contributions, existing studies suffer from several limitations. First, the modeling of CSI imperfection is inconsistent: some works incorporate estimation errors in the numerator of the signal-to-noise (SNR) model\cite{imperfectmodel1,imperfectoutmodel1,model1}, while others place them in the denominator as equivalent interference\cite{Eq1imperfect,model2s-1,model2s-2}. 
To date, the side-by-side comparative analysis and inherent mechanism between these two models remain unresolved. Hence, it remains unclear which model is more appropriate for evaluating specific performance metrics, how the performance gap between these two benchmarks behaves under different channel conditions, and what insights this gap provides into the true cost of CSI uncertainty. This leads to the absence of a unified benchmark for quantifying the impact of imperfect and outdated CSI on different performance metrics. 
Second, channels will experience severe multi-path fading and shadowing under complex and dynamic EWC environments, necessitating an accurate and adaptable channel model to indicate reliable system performance. Existing studies predominantly rely on conventional fading models (e.g., Rayleigh, Rician, and Nakagami-$m$), which fail to accurately capture the composite fading dynamics in post-disaster environments. Additionally, highly flexible generalized models, such as the $\lambda$-$\kappa$-$\mu$ and $\alpha$-$\eta$-$\mu$/inverse Gamma distributions\cite{compositechannel,IGA}, can characterize such composite channels. However, their excessive parameters and mathematically complex expressions often lead to prohibitive analytical complexity.

Motivated by the above observations, this paper proposes a unified analytical framework, the $\zeta$-Model. Far from being a mere conceptual categorization, the proposed framework recasts the aggregate CSI residual into two parts: the 
exploitable component and the unexplained component, parameterized by coefficient $\zeta$ to quantify, in a receiver-equivalent sense, the fraction of residual uncertainty exploitable through side information. 
In particular, we apply this framework to a heterogeneous UAV architecture comprising a high-altitude UAV-BS and multiple low-altitude R-UAVs in a post-disaster area. We adopt the modified Fisher-Snedecor $\mathcal{F}$ distribution for air-to-ground links, which offers a favorable balance between empirical suitability and analytical tractability for composite multipath-shadowing channels\cite{tcomF}. Concurrently, practical impairments are incorporated, including imperfect CSI due to estimation error, outdated CSI due to mobility, and discrete RIS phase shifts. Within this unified family, the proposed $\zeta$-Model reveals the existing Information-Theoretic Model (ITM) and Engineering Baseline Model (EBM) as the optimistic full-exploitation benchmark ($\zeta=1$) and pessimistic no-exploitation treatment ($\zeta=0$). Additionally, the Simplified Engineering Model (SEM) is regarded as a tractable approximation of the EBM. 
Based on this, we derive the moment-matching-based approximate distributions of the proposed models and conduct a statistical validation. By employing the Kullback-Leibler (KL) divergence for global behavior and the Mean Relative Error (MRE) for tail accuracy, we not only quantify the performance deviation between the approximate theoretical distributions and Monte Carlo empirical distributions but also demonstrate the operational boundary of SEM failure.  
The closed-form or finite-sum approximate expressions for the AC, OP, and EC are provided, and the asymptotic behaviors are analyzed to gain insights into the performance limits. 
Monte Carlo simulations are conducted to verify the consistency between analytical results and simulations. 

The remainder of this paper is organized as follows: Section II introduces the emergency heterogeneous UAV network, where a unified analytical framework in the presence of practical constraints is proposed, including outdated and imperfect CSI as well as discrete phase shift. Section III presents the statistical analysis of the SNR for the proposed model, followed by the KL divergence and MRE verification. Section IV provides the analytical expressions for the AC, OP, and EC. Simulations are presented in Section V. Finally, Section VI concludes the paper.  

\emph{Notations:} $\left\lVert\cdot \right\rVert $ denotes the Euclidean norm of the argument;
diag$(\cdot)$ denotes a diagonal matrix;
$\angle$ denotes the phase of a complex number; 
$(\cdot)^\ast$ denotes the complex conjugate operation;
$\operatorname{Re}(\cdot)$ and $\operatorname{Im}(\cdot)$ denote the real and imaginary parts, respectively; $\mathbb{E}_{_X}[\cdot]$ denotes the expectation of the RV $X$; $\mu_X$ and $\sigma^2_X$ denote the mean and variance of a RV $X$, respectively; $\mathcal{U}(\cdot, \cdot)$, $\mathcal{N}(\cdot, \cdot)$, $\mathcal{CN}(\cdot, \cdot)$, $\mathrm{Gamma}(\cdot,\cdot$), and $\mathrm{Beta}(\cdot,\cdot)$ denote the uniform, normal, circularly symmetric complex Gaussian (CSCG), Gamma, and Beta distributions, respectively.

\section{System Model}
\begin{figure}
	\centering
	{\includegraphics[width=0.85\columnwidth]{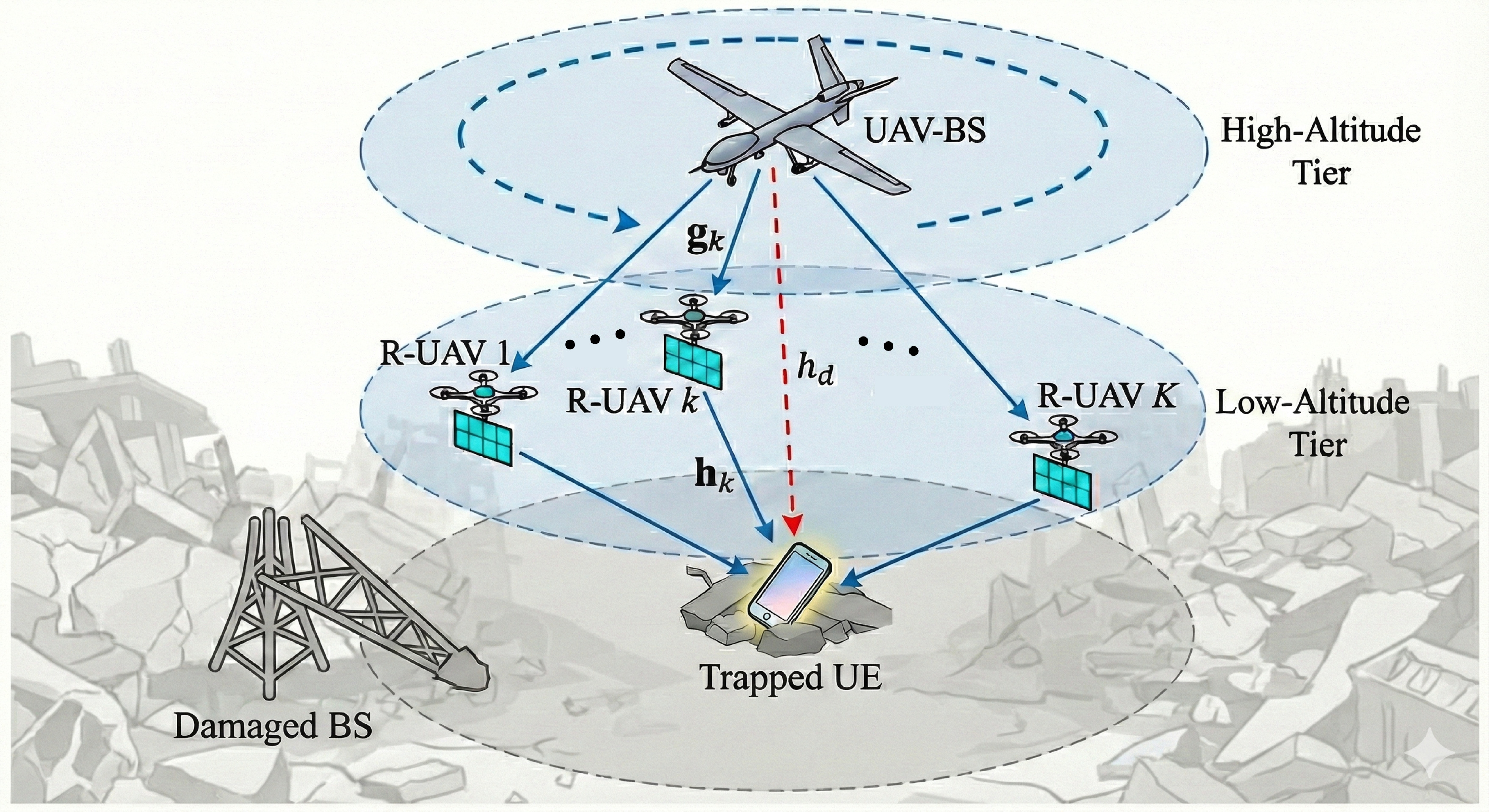}}
	\vspace{-0.2cm}\caption{The emergency heterogeneous UAV system model.}\vspace{-15pt}
	\label{systemmodel}
\end{figure}

Figure \ref{systemmodel} illustrates an emergency heterogeneous UAV system, where terrestrial BSs are damaged, and ground users are irregularly distributed or trapped beneath ruins. To restore communication services, a high-altitude large-scale UAV serves as the UAV-BS located at $\mathbf{q_b}=(x_b,y_b,z_b)$, and $K$ low-altitude small-scale R-UAVs are positioned at $\mathbf{q_k}=(x_k,y_k,z_k)$, respectively. 
The UAV-BS with a single transmit antenna serves the user equipment (UE) located at $\mathbf{q_0}=(x_0,y_0,z_0)$ via direct and multiple reflected links through the R-UAVs
\footnote{Although the direct UAV-BS--UE link is included in the system model, it may still be unreliable in post-disaster environments because of high-altitude distance loss, unfavorable elevation geometry, blockage, and severe shadowing. The R-UAV-assisted cascaded link is therefore introduced as a complementary propagation route with additional spatial flexibility.\vspace{0pt}}.
Let $h_d$ denote the channel coefficient of the direct link from the UAV-BS to the UE. It is assumed that each R-UAV has $N$ passive reflecting elements, hence the channel vectors from the UAV-BS to the $k$-th R-UAV and from the $k$-th R-UAV to the UE are denoted as $\mathbf{g_k}=[g_{k,1}, ..., g_{k,N}]\in \mathbb{C}^{1\times N}$ and $\mathbf{h_k}=[h_{k,1}, ..., h_{k,N}]\in \mathbb{C}^{1\times N}$, respectively. 
The RIS phase shift matrix of the $k$-th {R-UAV} is defined as $\bm{\Phi}_k=\text{diag}(e^{-j\Phi_{k,1}},...,e^{-j\Phi_{k,N}})\in\mathbb{C}^{N\times N}$, where $\Phi_{k,n}\in [0, 2\pi)$ represents the phase shift of the $n$-th reflecting element. For simplicity, the cascaded channel between the UAV-BS and the UE via the $k$-th R-UAV is denoted as $\bm{\chi}_k=[\chi_{k,1},...,\chi_{k,N}]\in\mathbb{C}^{1\times N}$, where each element is given by $\chi_{k,n}=g_{k,n}h_{k,n}e^{-j\Phi_{k,n}}=|g_{k,n}||h_{k,n}|e^{j\angle\chi_{k,n}}$ with $\angle\chi_{k,n}=\angle{g_{k,n}}+\angle{h}_{k,n}-\angle\Phi_{k,n}$. 
Therefore, the received signal at UE can be expressed as follows:\vspace{-3pt}
\begin{equation}
y=\sqrt{P_s}\left(
\frac{h_d}{\sqrt{PL_d}}
+
\sum_{k=1}^{K}\sum_{n=1}^{N}
\frac{\chi_{k,n}}{\sqrt{PL_{k,n}}}\right)x+n,\label{signal}\vspace{-3pt}
\end{equation}
where $P_s$ is the transmit power, $n\sim\mathcal{CN}(0, N_0)$ is the complex
additive white Gaussian noise (AWGN), $PL_d$ and $PL_{k,n}$ denote the path losses of the direct link and the cascaded link via the $n$-th element of the $k$-th R-UAV, respectively.\vspace{-10pt}
\subsection{Outdated and Imperfect CSI Model}
In practical RIS-aided UAV networks, perfect CSI is generally unavailable due to two sequential impairments. First, channel estimation becomes imperfect because of thermal noise, hardware limitations, and excessive overhead during the training stage. Second, owing to feedback delay, processing latency, and UAV mobility, this imperfect estimate may become outdated relative to the actual channel at the transmission time. 
We adopt a quasi-static block-fading model, where the channel remains constant within one transmission block and CSI aging is statistically captured through Clarke's fading model. 
Therefore, imperfect and outdated CSI are modeled as a sequential estimated-then-outdated process. 
According to \cite{Eq1imperfect,Eq1out}, the relation between the outdated and instantaneous channel coefficients can be expressed as\vspace{-3pt}
\begin{equation}
    h_d=\widehat{\rho}_d\underbrace{(\widetilde{\rho}_d\widetilde{h}_d+\sqrt{1-\widetilde{\rho}_d^2}\Delta h_d)}_
    {\widehat{h}_d}+\sqrt{1-\widehat{\rho}_d^2}e_d,\vspace{-7pt}
\end{equation}
and\vspace{-7pt}
\begin{equation}
    \bm{\chi}_k=\widehat{\rho}_k\underbrace{(\widetilde{\rho}_k\widetilde{\bm{\chi}}_k+\sqrt{1-\widetilde{\rho}_k^2}\Delta\bm{\chi}_k)}_{\widehat{\bm{\chi}}_k}+\sqrt{1-\widehat{\rho}_k^2}\mathbf{e}_k,\vspace{-5pt}
\end{equation}
where $\widehat{h}_d$ and ${\widehat{\bm{\chi}}}_k$ denote the estimated outdated channel coefficients of the direct and cascaded links, respectively. The terms $e_d\sim\mathcal{CN}(0,\widehat{\sigma}_d^2)$ and $\mathbf{e}_k=[e_{k,1},...,e_{k,N}]\in \mathbb{C}^{1\times N}$ with $e_{k,n}\sim\mathcal{CN}(0,\widehat{\sigma}_{k,n}^2)$ represent error terms due to outdated CSI. 
The correlation coefficients between the actual and outdated channel estimates are given by $\widehat{\rho}_{i}=\mathcal{J}_0 (2\pi f_d T_{d}) (i\in\left\{d,k\right\})$ based on Clarke's fading model, where $\mathcal{J}_0(\cdot)$ is the zeroth-order Bessel function of first kind, $T_{d}$ is the delay between the outdated CSI and instantaneous CSI, and $f_d$ denotes the Doppler spread caused by UAV mobility\cite{model2s-1}. 
The terms $\widetilde{h}_d$ and $\widetilde{\bm{\chi}}_k = [\widetilde{{\chi}}_{k,1},...,\widetilde{{\chi}}_{k,N}]\in\mathbb{C}^{1\times N}$ are the information of direct and cascaded links obtained by channel estimation technique, respectively, while $\Delta h_d\sim \mathcal{CN}(0, \widetilde{\sigma}_d^2)$ and $\Delta\chi_{k,n}\sim\mathcal{CN}(0, \widetilde{\sigma}_{k,n}^2)$ are the corresponding channel estimation error terms, respectively. $\widetilde{\rho}_i \in (0,1] (i\in\left\{d,k\right\})$ is the estimated correlation coefficient, depending on the specific estimation method and the number of pilot symbols used\cite{model1}. 

Although the optimal phase shift can be designed as {$\Phi_{k,n}^{*}=-\angle\widetilde{h}_{d}+\angle \widetilde{g}_{k,n}+\angle \widetilde{h}_{k,n}$} to maximize the received signal power, it is infeasible to implement continuous phase control in practice due to limited hardware computational capabilities of microcontrollers and circuits integrated with RIS\cite{continuousphase}. Therefore, for the $k$-th R-UAV with $q_k$-bit resolution and $2^{q_k}$ discrete phase shift, the optimal phase shift is selected as the closest value from set $\left\{{2\lambda\pi}/{2^{q_k}}\right\}$ with $\lambda \in \left\{0, 1, ..., 2^{q_k}-1\right\}$, which introduces phase error, denoted as $\theta_{k,n}\sim \mathcal{U}\left(-\frac{\pi}{2^{q_k}},\frac{\pi}{2^{q_k}}\right)$\cite{discretephase}. The quantization error $\theta_{k,n}$ causes residual phase misalignment among the reflected components and reduces the ideal coherent combining gain at the UE. 
Consequently, Eq.~\eqref{signal} can be rewritten as
\begin{equation}\vspace{-5pt}
	\begin{aligned}
		y&=\sqrt{{P_{s}}}\left(\underbrace{ \frac{\widehat{\rho}_{d}\widetilde{\rho}_{d}|\widetilde{h}_{d}|}{\sqrt{PL_d}}}_{D} +\underbrace{ \sum_{k}^{K}\sum_{n}^{N}\frac{\widehat{\rho}_{k,n}\widetilde{\rho}_{k,n}|\widetilde{\chi}_{k,n}| e^{-j\theta_{k,n}}}{\sqrt{PL_{k,n}}}}_{A}
		\right.\\
		&+\underbrace{\frac{\widehat{\rho}_{d}\sqrt{1-\widetilde{\rho}_d^2}\Delta h_{d}}{\sqrt{PL_d}} + \sum_{k}^{K}\sum_{n}^{N}\frac{\widehat{\rho}_{k,n}\sqrt{1-\widetilde{\rho}_{k,n}^2}\Delta \chi_{k,n}e^{j\Phi_{k,n}}}{\sqrt{PL_{k,n}}}}_{G}\\
		&\left.+ \underbrace{\left(\frac{\sqrt{1-\widehat{\rho}_{d}^2}e_d}{\sqrt{PL_d}} + \sum_{k}^{K}\sum_{n}^{N}\frac{\sqrt{1-\widehat{\rho}_{k,n}^2}e_{k,n}}{\sqrt{PL_{k,n}}} \right)}_{O} \right)x + n,
	\end{aligned}\label{siganl2}
\end{equation} 
where the first and second terms are the channel coefficients of direct and cascaded links with optimal phase shift, the third and fourth terms denote the channel coefficients with estimated and outdated error, respectively. For simplicity, we define $X = D + A$ as the composite coefficient of the desired channel,  and $E = G + O$ as the aggregate erroneous channel component.

\begin{remark}\label{complexgaussian}
$G$ and $O$ are linear combinations of independent complex Gaussian variables and follow circularly symmetric complex Gaussian (CSCG) distributions, i.e., $G\sim\mathcal{CN}(0, \sigma_{G}^2)$ and  $O\sim\mathcal{CN}(0, \sigma_{O}^2)$, with variances $\sigma_{G}^2=\frac{\widehat{\rho}_{d}^2(1-\widetilde{\rho}_{d}^2)\widetilde{\sigma}_{d}^2}{PL_d}+\sum_{k}^{K}\sum_{n}^{N}\frac{\widehat{\rho}_{k,n}^2(1-\widetilde{\rho}_{k,n}^2)\widetilde{\sigma}_{k,n}^2}{PL_{k,n}}$ and $\sigma_{O}^2=\frac{(1-\widehat{\rho}_{d}^2)\widehat{\sigma}_{d}^2}{PL_d}+\sum_{k}^{K}\sum_{n}^{N}\frac{(1-\widehat{\rho}_{k,n}^2)\widehat{\sigma}_{k,n}^2}{PL_{k,n}}$, respectively. Correspondingly, $E\sim\mathcal{CN}(0,\sigma_{E}^2)$, where $\sigma_{E}^2=\sigma_{G}^2+\sigma_{O}^2$.\footnote{The independent CSCG modeling of CSI errors is adopted as a tractable second-order surrogate for performance analysis. It does not exclude the possibility of correlated or non-Gaussian errors in high-mobility UAV scenarios.}
\end{remark}

\vspace{-9pt}
\subsection{Composite Fading Channel Model and Path Loss Model}
For the R-UAV--UE and UAV-BS--UE links, the path loss model based on the statistical parameters of the underlying terrestrial environment is adopted \cite{LAP}. Based on \cite[Eqs.~(2) and (7)]{tcomF}, the path loss (measured in dB) of the cascaded link over the $k$-th R-UAV, denoted by $PL_k^{\text{dB}}$, can be expressed as\vspace{-2pt}
\begin{equation}\label{plkmodel}
\begin{aligned}
{PL_k^\text{dB}}=10\alpha\log_{10}\left(\frac{d_{1,k}}{\lambda}\right)&+20\log_{10}\left(\frac{4\pi d_{2,k}}{\lambda}\right)\\&+\frac{(\eta_{\text{LoS}}-\eta_{\text{NLoS}})}{1+ae^{-b(\theta_k-a)}}+\eta_{\text{NLoS}},
\end{aligned}\vspace{-2pt}
\end{equation}
where $\alpha$ is the path loss exponent, $d_{1,k}=\left\lVert \boldsymbol{\rm q_b}-\boldsymbol{\rm q_k} \right\rVert$ is the distance between the UAV-BS and the $k$-th R-UAV. $d_{2,k}=\left\lVert \boldsymbol{\rm q_k}-\boldsymbol{\rm q_0} \right\rVert$ and $\theta_k$ are the distance and elevation angle between the $k$-th R-UAV and UE, respectively. The S-curve parameters $a$ and $b$ are specific to the simulated environment \cite[Tables I-II]{LAP}. The terms $\eta_{\text{LoS}}$ and $\eta_{\text{NLoS}}$ represent excessive path loss of LoS and NLoS propagation groups, respectively. 
The path loss of the direct link can be expressed as
\begin{equation}\label{pldmodel}
PL_d^{\text{dB}}={20\log_{10}\left(\frac{4\pi d }{\lambda}\right)}+\frac{(\eta_{\text{LoS}}-\eta_{\text{NLoS}})}{1+ae^{-b(\theta_d-a)}}+\eta_{\text{NLoS}},\vspace{-2pt}
\end{equation}
where $d=\left\lVert \boldsymbol{\rm q_b}-\boldsymbol{\rm q_0} \right\rVert$ is the distance between the UAV-BS and the UE, and $\theta_d$ is the corresponding elevation angle. The corresponding path loss terms in the linear scale $PL_k$ and $PL_d$ can be directly converted from $PL_k^{\text{dB}}$ and $PL_d^{\text{dB}}$, respectively\footnote{{Due to the small physical aperture of each RIS relative to the propagation distances, the path-loss variation within the same R-UAV is neglected, yielding $PL_{k,n}\simeq PL_k$, $\forall n$.}}.

The UAV-BS--R-UAV channel is modeled as a normalized LoS-dominant aerial link for tractability, where the large-scale fading is captured as $10\alpha\log_{10}(d_{1,k}/\lambda)$ in $PL_k$ and the small-scale component is idealized as unit modulus, i.e., $|g_{k,n}|=1$. This assumption is adopted as a benchmark abstraction for high-altitude aerial backhaul links. 
For the UE trapped beneath ruins or in collapsed underground spaces, the channel is dominated by complex terrain and local scatterers, inducing severe composite fading. To accurately characterize such an environment, the modified Fisher-Snedecor $\mathcal{F}$ fading model in~\cite{tcomF} is adopted, which accurately characterizes multi-path fading and shadowing via parameters $m$ and $m_s$, respectively
. Therefore, we consider that the channel envelopes pertaining to the UE, i.e., $|\widetilde{h}_d|$ and $|\widetilde{\chi}_{k,n}|$, undergo the modified Fisher-Snedecor $\mathcal{F}$ fading, the PDFs of which are given by
\vspace{-5pt}\begin{equation}\label{hdPDF}
	f(|\widetilde{h}_d|)=\frac{2(m^d\Omega_{s}^d)^{m^d}(m_{s}^d\Omega_{m}^d)^{m_{s}^d}{|\widetilde{h}_d|}^{2m^d-1}}{B(m^d,m_{s}^d)[m^d\Omega_{s}^d{|\widetilde{h}_d|}^2+m_{s}^d\Omega_{m}^d]^{m^d+m_{s}^d}},\vspace{-5pt}
\end{equation}
and\vspace{-5pt}
\begin{equation}\label{hrisPDF}
	\begin{aligned}
	&f(|\widetilde{{\chi}}_{k,n}|)=\\
	&\frac{2(m^{k,n}\Omega_{s}^{k,n})^{m^{k,n}}(m_{s}^{k,n}\Omega_{m}^{k,n})^{m_{s}^{k,n}}{|\widetilde{\chi}_{k,n}|}^{2m^{k,n}-1}}{B(m^{k,n},m_{s}^{k,n})[m^{k,n}\Omega_{s}^{k,n}{|\widetilde{\chi}_{k,n}|}^2+m_{s}^{k,n}\Omega_{m}^{k,n}]^{m^{k,n}+m_{s}^{k,n}}},
	\end{aligned}\vspace{-4pt}
\end{equation}
respectively, where $B(a, b)$ is the beta function with $B(a,b) =\frac{\Gamma(a)\Gamma(b)}{\Gamma(a+b)}$\cite[Eq.~(8.384.1)]{2007Table} and $\Gamma(b) = \int_{0}^{\infty}{t^{b-1}e^{-t}}dt$ is the gamma function \cite[ Eq.~(8.310)]{2007Table}. Here, $m^{i}$ and $m_s^{i}~ (i\in\left\{d, (k,n)\right\})$ denote the multi-path fading and shadowing parameters. Specifically, $m^{i}\rightarrow 0^+$ and $m_s^{i} \rightarrow 1^+$ indicate harsher channel conditions with severer multipath/shadowing, whereas $m^{i}\rightarrow \infty$ and $m_s^i\rightarrow \infty$ correspond to better propagation conditions. Notably, as $m_s^{i} \rightarrow \infty$, the shadowing effect vanishes and the distribution reduces to Nakagami-$m$ fading\cite{Fchannel}. This makes the $\mathcal{F}$ model especially useful for capturing a wide range of emergency-environment conditions. The terms $\Omega_m^{i}$ and $\Omega_{s}^{i}$ represent the mean power of the multi-path and shadowing components, respectively, resulting in the mean power of the composite signal envelope as $\Omega^{i}=\Omega_m^{i}\Omega_s^{i}$.

\vspace{-13pt}
\subsection{The Unified SNR Framework ($\zeta$-Model)}
In the presence of outdated and imperfect CSI, existing works typically categorize the erroneous components into either effective channel gain or adverse interference, making the corresponding SNR models as isolated receiver assumptions rather than interpreted within a common analytical framework.
To address this issue, we propose a unified 
SNR framework, referred to as the $\zeta$-Model, which organizes these CSI-error treatments into a common analytical family under the same residual-error budget.

Specifically, $E$ can be interpreted as the aggregate residual uncertainty after the initial CSI acquisition stage, capturing the remaining impairment caused by outdated CSI and estimation imperfections. To characterize the receiver's capability to partially exploit this residual term through additional receiver-side information $\mathcal{S}$\footnote{
The side information $\mathcal{S}$ abstracts auxiliary receiver-side information available beyond the initial CSI acquisition, such as position information, map/digital-twin features, or sensing-assisted context information\cite{S2,S1,S4}.}
, we decompose
$E$ as $E=E_\zeta+\varepsilon_\zeta$, 
where $E_\zeta \triangleq \mathbb{E}[E \mid \mathcal{S}]$ denotes the receiver-exploitable component, and $\varepsilon_\zeta  \triangleq E - \mathbb{E}[E \mid \mathcal{S}]$ is the remaining unexplained residual term. By the orthogonality principle of MMSE estimation, the unexplained residual $\varepsilon_\zeta$ is orthogonal to any $\mathcal S$-measurable function, leading to $\mathbb{E}[E_\zeta \varepsilon_\zeta^\ast]=0$, which further yields $\mathbb{E}[|E|^2]=\mathbb{E}[|E_\zeta|^2]+\mathbb{E}[|\varepsilon_\zeta|^2]$.
Then, we define the residual-error exploitation coefficient $\zeta\in[0,1]$ 
as follows:
\begin{equation}\label{zeta_def}
\zeta \triangleq \frac{\mathbb{E}\left[|E_\zeta|^2\right]}{\mathbb{E}[|E|^2]}, ~~~~1-\zeta\triangleq\frac{\mathbb{E}\left[|\varepsilon_\zeta|^2\right]}{\mathbb{E}[|E|^2]}.\vspace{-5pt}
\end{equation}
For a concrete refinement/prediction method, an effective operating point of $\zeta$ can be calibrated through the relative NMSE improvement \cite{S4}
\(
\zeta_{\mathrm{eff}}=1-\mathrm{NMSE}_{\mathrm{refined}}/\mathrm{NMSE}_{\mathrm{initial}}
\).
To obtain a tractable $\zeta$ analytical family, we adopt the following receiver-side Gaussian-equivalent model as\vspace{0pt}
\begin{equation}\label{Gaussian-equivalent model}
	\begin{bmatrix}E_\zeta\\
		\varepsilon_\zeta
	\end{bmatrix}
	\sim
	\mathcal{CN}\!\left(
	\mathbf 0,
	\begin{bmatrix}
		\zeta \sigma_E^2 & 0\\
		0 & (1-\zeta)\sigma_E^2
	\end{bmatrix}
	\right).\vspace{-2pt}
\end{equation}

Based on the above decomposition, the generalized SNR, denoted by $\gamma_\zeta$, can be represented as follows:\vspace{-3pt}
\begin{equation}\label{zetamodel_defination}
	\begin{aligned}
	\gamma_\zeta &=  \frac{P_s |X+E_\zeta|^2}{P_s |\varepsilon_\zeta|^2 + N_0}= \frac{\bar{\gamma} |U|^2}{\bar{\gamma} |\varepsilon_\zeta|^2 + 1},
	\end{aligned}\vspace{-2pt}
\end{equation}
where $\bar{\gamma}={P_s}/{N_0}$ is the average SNR, and the composite useful term $U= X+ E_\zeta$ is defined for simplicity.

In the following, we commence by introducing the Information-Theoretic Model (ITM) and Engineering Baseline Model (EBM) as two special cases at the boundaries of $\zeta$, followed by the  Simplified Engineering Model (SEM) as a tractable approximation of the EBM. 

\subsubsection{Information-Theoretic Model (ITM)} 
{The idealized boundary case $\zeta = 1$ corresponds to a receiver model in which the entire residual term is assumed exploitable, 
yielding $\mathbb{E}_\zeta=E$ and $\varepsilon_\zeta=0$\cite{imperfectmodel1,imperfectoutmodel1,model1}.} Accordingly, the SNR at the UE can be expressed as follows:
\begin{equation}
	\gamma_1= \frac{P_s |X + E|^2}{N_0} = \bar{\gamma} |Z|^2,\vspace{-5pt}
\end{equation}
where $Z = X+E$ is the overall channel statistic. From an information-theoretic benchmark perspective, if the receiver can perfectly exploit the instantaneous effective channel realization, the achievable rate is determined by the overall channel statistic $Z=X+E$. Therefore, the ITM serves as an ideal full-exploitation upper benchmark within the considered residual-error budget.

{\subsubsection{Engineering Baseline Model (EBM)} When $\zeta=0$, the side information $\mathcal{S}$ is statistically uninformative about $E$, leading to ${E}_\zeta=0$ and $\varepsilon_\zeta=E$.} Therefore, the receiver is assumed to have no knowledge of $E$ and treats it entirely as detrimental and uncorrelated interference that degrades the decoding performance\cite{Eq1imperfect,model2s-1,model2s-2}. The instantaneous SNR at the UE can be expressed as follows:
\begin{equation}\begin{aligned}\gamma_0 = \frac{P_s |X|^2}{P_s |E|^2 + N_0} = \frac{\bar{\gamma} |X|^2}{\bar{\gamma} |E|^2 + 1}.\end{aligned}\vspace{-3pt}
\end{equation}
This model serves as a conservative engineering baseline where the residual uncertainty is fully treated as harmful interference. Hence, it is useful for characterizing the robustness floor of the system under outdated and imperfect CSI. 

\subsubsection{Simplified Engineering Model (SEM)} 
Analyzing the performance of EBM is mathematically complex due to the presence of the instantaneous random variable $|E|^2$ in the denominator. To facilitate a tractable analysis, a common approach in \cite{model2s-1,model2s-2} is proposed to approximate the instantaneous interference power with its expected value, which can be expressed as follows: 
\begin{equation}\label{Model2gamma_simple}
	\gamma_0^s=\frac{P_s|X|^2}{\sigma_{\text{ef}}^2}={\bar{\gamma}_{_\text{ef}}|X|^2},\vspace{-3pt}
\end{equation}
where $\sigma_{\text{ef}}^2=\mathbb{E}[P_s|E|^2+N_0]=P_s\sigma_{E}^2+N_0$ represents the effective noise power combining the interference and AWGN, and  $\bar{\gamma}_{_\text{ef}}={P_s}/{\sigma_{\text{ef}}^2}$  denotes the effective transmit SNR. 
\begin{remark}
The novelty of the proposed framework lies in recasting the aggregated residual impairment into a receiver-oriented exploitable/unexplained decomposition with respect to additional side information. This enables the cases of $\zeta=0$ and $\zeta=1$ to be interpreted as two boundary treatments of the same residual-error budget, rather than as unrelated modelling choices adopted in different prior works.
\end{remark}

\vspace{-5pt}
\section{Statistical Analysis of SNR}\label{section3}\vspace{-5pt}
In this section, we first characterize the statistical properties of the received SNR under the unified $\zeta$-Model based on gamma moment matching, and then derive the distributions for the boundary cases, i.e., ITM, EBM, and SEM. 
Following the derivations, we employ KL divergence to verify the global goodness-of-fit, and then use the MRE to specifically assess the approximation accuracy of the tail distributions.
\vspace{-5pt}
\subsection{Statistical Analysis of SNR}
\subsubsection{$\zeta$-Model under Different Shadowing Regimes}
Based on \textit{Remark} \ref{complexgaussian}, $|E|^2$ obeys an exponential distribution with parameter $\sigma_{E}^2$. 
Hence, according to the receiver-side Gaussian-equivalent model in 
Eq.~\eqref{Gaussian-equivalent model}, $|\varepsilon_\zeta|^2$ follows an exponential distribution with parameter $\sigma_{\varepsilon_\zeta}^2=(1-\zeta)\sigma_{E}^2$, the PDF of which is given by \vspace{-5pt}
\begin{equation}\label{PDF_varepsilon}
	f_{_{|\varepsilon_\zeta|^2}}(\varepsilon)=\frac{1}{\sigma_{\varepsilon_\zeta}^2}\exp\left(-\frac{\varepsilon}{\sigma_{\varepsilon_\zeta}^2}\right), \quad \varepsilon \geq 0.
\end{equation}
The corresponding CDF of $|\varepsilon_\zeta|^2$, denoted by $F_{_{|\varepsilon_\zeta|^2}}(\varepsilon)$, can be expressed as follows:\vspace{-5pt}
\begin{equation}
	F_{_{|\varepsilon_\zeta|^2}}(\varepsilon) =1-\exp\left(-\frac{\varepsilon}{\sigma_{\varepsilon_\zeta}^2}\right),\quad \varepsilon\geq 0.\vspace{-5pt}
	\label{CDF_V}
\end{equation}

Due to the composite structure of the numerator term $|U|^2$, deriving its exact distribution is analytically intractable. The moment matching method works well for positive RVs with unimodal and fast-decaying PDFs\cite{R3-GAMMA}, while the Gamma distribution, as a Type-III Pearson distribution, is widely used to match such RVs through their first two moments\cite{gamma_momentmatch}. Therefore, we employ Gamma moment matching to obtain tractable approximate PDFs/CDFs. 

However, the admissible moment order of the modified Fisher-Snedecor-\(\mathcal F\) envelope depends explicitly on \(m_s\). Specifically, for the power-domain matching of \(|U|^2\), the first two moments of \(|U|^2\) are required, which in turn involve the fourth-order moments of the underlying channel terms. For the amplitude-domain matching of \(|U|\), the mean and variance of \(|U|\) are required.  According to Eq.~\eqref{Eh}, the \(i\)-th moment of a Fisher-Snedecor-\(\mathcal F\) RV exists only when \(m_s>i/2\). Hence, the fourth-order moment diverges for \(m_s\leq 2\), while the second-order moment diverges for \(m_s\leq 1\). 
Therefore, we adopt a regime-dependent construction: power-domain matching of \(|U|^2\) for \(m_s>2\), and amplitude-domain matching of \(|U|\) for \(1<m_s\le 2\) in the following. 



$\mathit{a)}~ $For $m_s>2$, we use the gamma moment-matching approach to approximate $|U|^2$. Accordingly, the PDF of $|U|^2$, denoted by $f_{{|U|^{2}}}(x)$, can be expressed as follows:\vspace{-5pt}
\begin{equation}
	f_{{|U|^{2}}}(u) \approx \frac{u^{k_{u_2}-1}}{\Gamma(k_{u_2})\theta_{u_2}^{k_{u_2}}}  \exp\left({-\frac{u}{\theta_{u_2}}}\right),  \vspace{-3pt}
	\label{PDF_U2}
\end{equation}
where $k_{u_2}={\mu_{|U|^2}^2}/{\sigma_{|U|^2}^2}$ and $\theta_{u_2}={\sigma_{|U|^2}^2}/{\mu_{|U|^2}}$ are the shape and scale parameters, respectively. 
The corresponding CDF of $|U|^2$, denoted by $F_{{|U|^{2}}}(u)$, can be expressed as 
\begin{equation}
	F_{{|U|^{2}}}(u) \approx \frac{1}{\Gamma(k_{u_2})} \gamma(k_{u_2}, \frac{u}{\theta_{u_2}}), 
	\label{CDF_U2}
\end{equation}
where $\gamma(c, x)=\int_{0}^{x}t^{c-1}e^{-t}dt$ denotes the lower incomplete gamma function.

$\mathit{b)}~ $For $1<m_s\leq 2$, we approximate $|U|$ with gamma moment-matching. 
Accordingly, the PDF of $|U|$, denoted by $f_{|U|}(u)$, can be expressed as follows:
\begin{equation}
	f_{|U|}(u) \approx \frac{u^{k_{u_1}-1}}{\Gamma(k_{u_1})\theta_{u_1}^{k_{u_1}}}  \exp\left({-\frac{u}{\theta_{u_1}}}\right), \quad u \geq 0,\vspace{-3pt}
	\label{PDF_U}
\end{equation}
where $k_{u_1}={\mu_{|U|}^2}/{\sigma_{|U|}^2}$ and $\theta_{u_1}={\sigma_{|U|}^2}/{\mu_{|U|}}$. 
The corresponding CDF of $|U|$, denoted by $F_{|U|}(u)$, can be expressed as 
\begin{equation}
	F_{|U|}(u) \approx \frac{1}{\Gamma(k_{u_1})} \gamma(k_{u_1}, \frac{u}{\theta_{u_1}}), \quad u \geq 0.\vspace{-5pt}
	\label{CDF_U}
\end{equation}
The statistical moments $\mu_{|U|}$, $\sigma_{|U|}^2$, $\mu_{|U|^2}$ and $\sigma_{|U|^2}^2$ are all presented in Appendix~\ref{AppendixA}.

Under the receiver-equivalent decoupling surrogate, \(U\) is treated as independent of the denominator \(\varepsilon_\zeta\)\footnote{This step is a stronger receiver-equivalent decoupling surrogate, motivated by the analytical decoupling commonly adopted in classical imperfect-CSI analyses \cite{Eq1imperfect,X-Eindependence}, where the desired channel term and the Gaussian error component are treated as independent to obtain tractable SNR characterization.}. Therefore, the CDF of $\gamma_\zeta$, denoted by $F_{\gamma_\zeta}(\gamma)$, can be derived in Eq.~\eqref{CDF_GAMMAZETA} as shown on the bottom of this page, which involves closed-form and finite-sum approximate expressions under $m_s>2$ and $1<m_s\leq 2$, respectively. $N_q$ is the truncation order of the Gauss-Laguerre quadrature, $y_i$ is the $i$-th root of the Laguerre polynomial, and $w_i$ is the corresponding weight factor. 
\begin{IEEEproof}
	See Appendix~\ref{AppendixB}.
\end{IEEEproof}

{\begin{remark}
The intermediate-\(\zeta\) characterization is based on a receiver-equivalent analytical surrogate. Specifically, the conditional-mean decomposition in Eq.~\eqref{zeta_def} provides theoretical interpretation of \(\zeta\), while the jointly Gaussian independent representation in Eq.~\eqref{Gaussian-equivalent model} specifies a tractable statistical model for \((E_\zeta,\varepsilon_\zeta)\). The moment-level closure \(X\perp E_\zeta\) adopted in Appendix A is for the moment matching of $U$, while \(U\perp \varepsilon_\zeta\) is for tractable SNR characterization. These assumptions enable performance characterization under a unified residual-error budget. If a specific receiver/refinement architecture induces statistical coupling among these terms, additional mixed moments would be required.
\end{remark}}

\begin{figure*}[b]
				\rule[0pt]{18.05cm}{0.05em}  
\begin{equation}\label{CDF_GAMMAZETA}
	F_{\gamma_\zeta}(\gamma)=
		\begin{cases}
		\begin{aligned}
		&1-\frac{1}{\Gamma(k_{u_2})}\left[\Gamma(k_{u_2},\frac{\gamma }{\bar{\gamma}\theta_{u_2}}) - \exp\left(\frac{1}{\bar{\gamma}\sigma_{\varepsilon_\zeta}^2}\right)\left(\frac{\sigma_{\varepsilon_\zeta}^2\gamma}{\theta_{u_2}+\sigma_{\varepsilon_\zeta}^2\gamma}\right)^{k_{u_2}}\Gamma(k_{u_2},\frac{\theta_{u_2}+\sigma_{\varepsilon_\zeta}^2\gamma }{\bar{\gamma}\theta_{u_2}\sigma_{\varepsilon_\zeta}^2})\right],  ~~~~m_s>2;\\
		&\frac{1}{\Gamma(k_{u_1})} \sum_{i=1}^{N_{q}} w_{i} \gamma\left(k_{u_1}, \frac{\sqrt{\gamma ( \sigma_{\varepsilon_\zeta}^2 y_i +\frac{1}{\bar{\gamma}})}}{\theta_{u_1}}\right), ~~~~~~~~~~~~~~~~~~~~~~~~~~~~~~~~~~~~~~~~~~~~~~~~~~{1<m_s \leq 2}.
		\end{aligned}
		\end{cases}
\end{equation}
\end{figure*}

\begin{figure}\vspace{-5pt}
	\centering
	{\hspace{-5pt}
	\includegraphics[width=0.9\columnwidth]{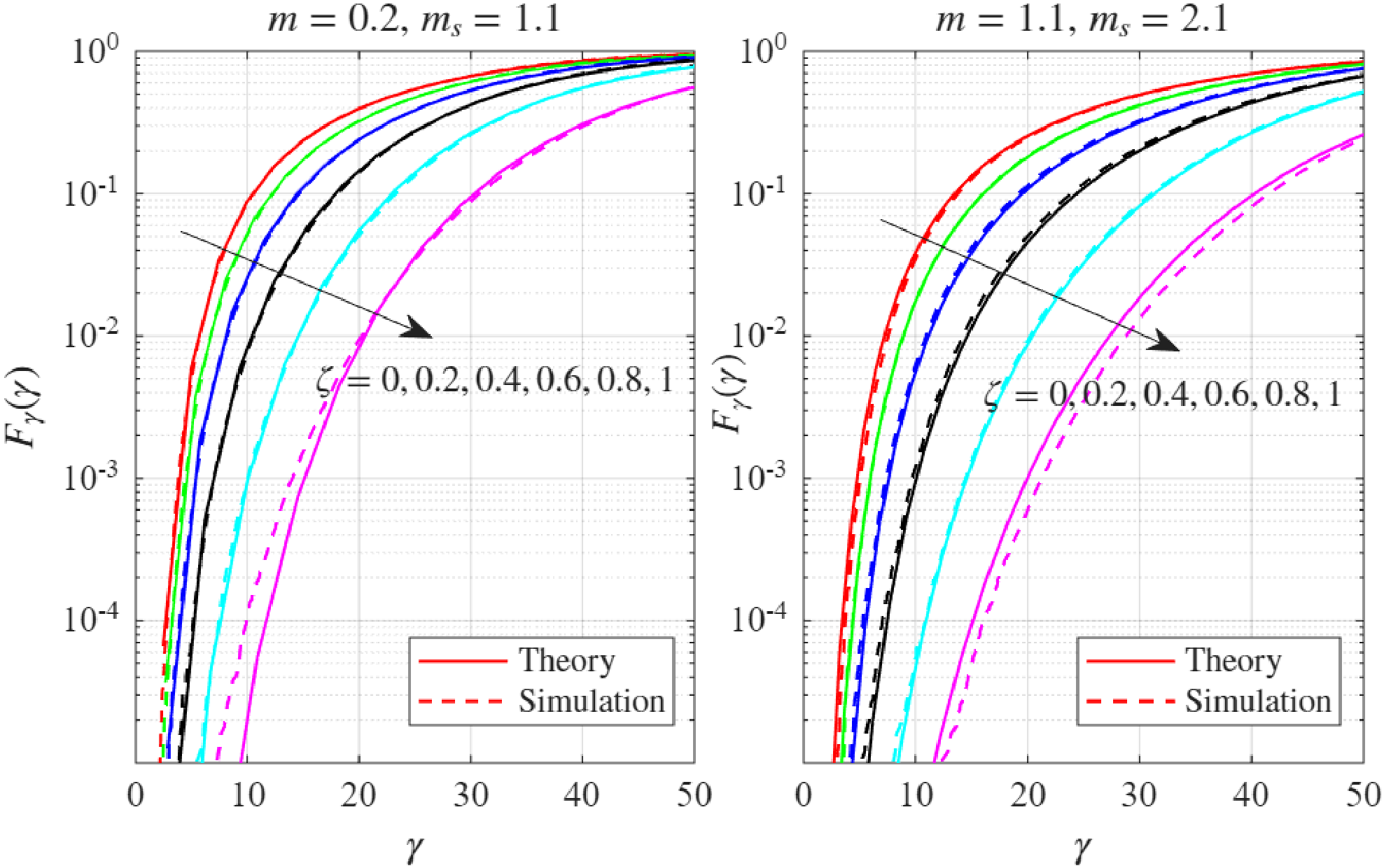}}
	\vspace{-8pt}\caption{Simulated and theoretical CDF for $\zeta$-Model under {$1<m_s \leq 2$} and $m_s>2$, respectively.}
	\label{CDF}\vspace{-15pt}
\end{figure}

Since directly obtaining the PDF expression of $\gamma_\zeta$ by differentiating Eq.~\eqref{CDF_GAMMAZETA} is mathematically intractable, 
we further focus on establishing the distributions in closed-form for the two boundary cases with $\zeta=0$ and $\zeta=1$. This strategy not only yields analytically tractable benchmark results, but also provides an interpretable performance span between the pessimistic and optimistic receiver treatments. The statistical gap between $\gamma_1$ and $\gamma_0$ can therefore serve as a critical metric to quantify the potential gain with different treatments of the residual error components. 
\subsubsection{The Boundary Cases} \label{boundary_distribution}
We also distinguish two analytical regimes based on different $m_s$ in the following. 
For the ITM with $m_s > 2$, $|Z|^2 \sim \mathrm{Gamma}(k_{z_2},\theta_{z_2}$) can be obtained following the same derivation in Section III-A, where $k_{z_2}={\mu_{|Z|^2}^2}/{\sigma_{|Z|^2}^2}$ and $\theta_{z_2}={\sigma_{|Z|^2}^2}/{\mu_{|Z|^2}}$ can be calculated based on Appendix~\ref{AppendixA} by setting $\zeta=1$. Likewise, for {$1< m_s \leq 2$}, $|Z|\sim \mathrm{Gamma}(k_{z_1},\theta_{z_1})$, where $k_{z_1}={\mu_{|Z|}^2}/{\sigma_{|Z|}^2}$ and $\theta_{z_1}={\sigma_{|Z|}^2}/{\mu_{|Z|}}$ are also given in Appendix~\ref{AppendixA}. Therefore, after some variable changes, the PDF and CDF of $\gamma_1$, denoted by $f_{\gamma_1}(\gamma)$ and $F_{\gamma_1}(\gamma)$, can be expressed respectively as follows:\vspace{-5pt}
\begin{equation}\label{pdfgamma1}
f_{\gamma_1}(\gamma) =
	\begin{cases}
		\begin{aligned}
		&
		\frac{{\gamma}^{k_{z_2}-1}}{\Gamma(k_{z_2})(\bar{\gamma}\theta_{z_2})^{k_{z_2}}}  \exp\left({-\frac{\gamma}{\bar{\gamma}\theta_{z_2}}}\right),\quad m_s > 2;\\
		&
		\!\frac{{\gamma}^{\frac{k_{z_1}}{2}-1}}{2\Gamma(k_{z_1})\theta_{z_1}^{k_{z_1}}\bar{\gamma}^{\frac{k_{z_1}}{2}}}\!  \exp\!\left(\!{-\frac{1}{\theta_{z_1}}\sqrt{\frac{\gamma}{\bar{\gamma}}}}\right),\! {1 <m_s \leq 2},\vspace{-5pt}
		\end{aligned}
		\end{cases}
\end{equation}
and\vspace{-5pt}
\begin{equation}\label{cdfgamma1}
F_{\gamma_1}(\gamma) =
\begin{cases}
		\begin{aligned}
&\frac{1}{\Gamma(k_{z_2})} \gamma\left(k_{z_2}, \frac{\gamma}{{\theta_{z_2}}\bar{\gamma}}\right), \quad m_s >2;\\
&\frac{1}{\Gamma(k_{z_1})} \gamma(k_{z_1}, \frac{1}{\theta_{z_1}}\sqrt{\frac{\gamma}{\bar{\gamma}}}), \quad {1 < m_s \leq 2}.\vspace{-5pt}
		\end{aligned}
\end{cases}
\end{equation}

For the EBM and SEM with $m_s > 2$, $|X|^2 \sim \mathrm{Gamma}(k_{x_2}, \theta_{x_2})$ can be derived, where $k_{x_2}={\mu_{|X|^2}^2}/{\sigma_{|X|^2}^2}$ and $\theta_{x_2}={\sigma_{|X|^2}^2}/{\mu_{|X|^2}}$ can be calculated from Eqs.~\eqref{EX2} and \eqref{EX4} in Appendix \ref{AppendixA}. Similarly, for the case of $1<m_s \leq 2$, $|X| \sim \mathrm{Gamma}(k_{x_1}, \theta_{x_1})$ can be derived, where $k_{x_1}={\mu_{|X|}^2}/{\sigma_{|X|}^2}$ and $\theta_{x_1}={\sigma_{|X|}^2}/{\mu_{|X|}}$ can also be obtained from Eq.~\eqref{EU}. Correspondingly, for the SEM, the PDF and CDF of ${\gamma^0_s}$, denoted by $f_{\gamma_0^s}(\gamma)$ and $F_{\gamma_0^s}(\gamma)$, can be directly obtained as \vspace{-5pt}
\begin{equation}\label{pdfgamma2s}
		f_{\gamma_0^s}(\gamma) 
		\!=\!
		\begin{cases}\!
		\begin{aligned}
		&\frac{{\gamma}^{k_{x_2}-1}}{\Gamma(k_{x_2})(\bar{\gamma}_{_\text{ef}}\theta_{x_2})^{k_{x_2}}}  \exp\left({-\frac{\gamma}{\bar{\gamma}_{_\text{ef}}\theta_{x_2}}}\right), m_s > 2;\\
		&\!\frac{\gamma^{\frac{k_{x_1}}{2}-1}}{2\Gamma(k_{x_1})\theta_{x_1}^{k_{x_1}}{\bar{\gamma}_{_\text{ef}}}^{\frac{k_{x_1}}{2}}}\!\exp\!\left(\!{\!-\frac{1}{\theta_{x_1}}\!\sqrt{\frac{\gamma}{\bar{\gamma}_{_\text{ef}}}}}\right)\!,  {1<m_s\le 2},\vspace{-5pt}
	\end{aligned}
	\end{cases}
\end{equation}
and \vspace{-5pt}
\begin{equation}\label{cdfgamma2s}
		F_{\gamma_0^s}(\gamma) 
		=
		\begin{cases}
		\begin{aligned}
		&\frac{1}{\Gamma(k_{x_2})} \gamma\left(k_{x_2}, \frac{\gamma}{{\theta_{x_2}}\bar{\gamma}_{_\text{ef}}}\right), \quad m_s >2;	\\
		&\frac{1}{\Gamma(k_{x_1})} \gamma\left(k_{x_1}, \frac{1}{\theta_{x_1}}\sqrt{\frac{\gamma}{\bar{\gamma}_{_\text{ef}}}}\right), \quad {1<m_s\le 2},\vspace{-5pt}
	\end{aligned}
	\end{cases}
\end{equation}
respectively. Nevertheless, for the EBM, the CDF expression of  $\gamma_0$ can only be obtained by replacing $\left\{k_{u_2},k_{u_1}\right\}$ with $\left\{k_{x_2},k_{x_1}\right\}$, $\left\{\theta_{u_2},\theta_{u_1}\right\}$ with $\left\{\theta_{x_2},\theta_{x_1}\right\}$, and $\sigma_{\varepsilon_\zeta}^2$ with $\sigma_{E}^2$ in Eq.~\eqref{CDF_GAMMAZETA}, respectively.

Figure~\ref{CDF} illustrates the CDF of $\gamma_\zeta$ for different $\zeta$ under the two shadowing regimes $m_s>2$ and $1<m_s\le 2$. In the considered parameter settings, as $\zeta$ increases from $0$ to $1$, CDF curves generally shift toward the higher SNR region, indicating the performance gain achievable as the receiver transitions from treating error as interference to fully exploiting it. 
Moreover, the discrepancy between theoretical and simulated curves becomes more visible in the left-tail region, particularly for larger $\zeta$ and milder channel conditions. 
The reason is that better residual error exploitation and milder fading/shadowing conditions shift the CDF to the right, so that the same fixed interval of $\gamma$ effectively probes a much more extreme left-tail portion of the SNR distribution. 

\vspace{-5pt}\subsection{Statistical Validation of Approximation Accuracy}\label{verification}
To evaluate the accuracy of the approximations for the proposed model, we first employ a statistical metric for global goodness-of-fit\cite{KL} to quantify the distance between the reference empirical distribution, $f_\text{emp}(x)$, and the theoretical approximation, $f_\text{app}(x)$.    
The KL divergence $D_{\text{KL}}$ is given by \vspace{-2pt} 
\begin{equation}\label{KL}
	D_{\text{KL}}=\int_{0}^\infty f_\text{emp}(x)\log\left(\frac{f_\text{emp}(x)}{f_\text{app}(x)}\right)dx,\vspace{-3pt} 
\end{equation}
where smaller values indicate a higher fitting accuracy. 

However, according to the Large Deviations Theory, the CLT-based approximation may not be accurate if $x$ is far from $\mathbb{E}[x]$. It implies $D_{\text{KL}}$ is often dominated by the distribution's mean and is less sensitive to deviations in the tail. 
To further assess the approximation accuracy in the tail of the distribution
, we adopt a quantile-aligned tail error metric {MRE}. Specifically, we consider a fixed probability grid $
\mathcal{R}=\{r_\ell\mid \tau_\text{lb}\le r_\ell\le \tau_\text{ub}\}$,
where $\tau_\text{ub}$ is determined by the target reliability of interest, $\tau_\text{lb}$ acts as a statistical confidence floor to ensure numerical
stability of empirical CDF, and the probability window $[\tau_\text{lb},\tau_\text{ub}]$ directly corresponds to the reliability range of engineering interest. 
For each $r_\ell\in\mathcal{R}$, the corresponding empirical quantile is  $\gamma_\ell = F_{\text{emp}}^{-1}(r_\ell)$.
Then, the MRE, denoted by $\bar{\epsilon}_F$, is defined to evaluate the relative CDF mismatch over a fixed probability window, expressed as 
\begin{equation}
\bar{\epsilon}_F
=
\frac{1}{|\mathcal{R}|}
\sum_{r_\ell\in\mathcal{R}}
\frac{\left|F_{\text{app}}(\gamma_\ell)-r_\ell\right|}{r_\ell}.
\end{equation}
Figure \ref{KL_N} plots the $D_{\text{KL}}$ for the proposed $\zeta$-model versus $N$ and $\zeta$ under two shadowing regimes of $m_s$, where the ITM and SEM boundaries are also included for comparison. In the numerical evaluation, $f_\text{app}(x)$ is obtained from the approximate CDF by numerical differentiation, while $f_\text{emp}(x)$ is estimated from Monte Carlo empirical samples using the same SNR grid. 
It is observed that, at each fixed $N$, $D_{\text{KL}}$ remains nearly flat as $\zeta$ increases and decreases with $N$ for all $\zeta$ for both subfigures, which suggests that the adopted moment-matching framework becomes more accurate for larger RIS sizes and is relatively robust to $\zeta$. 
At the boundary $\zeta=0$, the fitting accuracy of SEM first achieves a comparable or even superior fit to that of $\zeta$-model at
low $N$, but then deteriorates at
large $N$, since the expectation modeling for error term fails as $N$ increases, leading to an
increasing mismatch. 
At the boundary $\zeta=1$, the ITM curve is close to the $\zeta$-model in the left subfigure, while a visible deviation appears in the right subfigure
\footnote{This discrepancy for $m_s>2$ is mainly caused by the numerical ill-conditioning of the unified closed-form as $\sigma_{\varepsilon_\zeta}^2 \to 0$ in Eq.~\eqref{CDF_GAMMAZETA}. It
becomes more visible in small $N$ regime since the exponential and upper incomplete Gamma terms with $\sigma_{\varepsilon_\zeta}^2$ cause a removable-singularity-type limit and loss of numerical accuracy near $\zeta=1$,
whreas the expression for $1<m_s\leq 2$ has a quadrature-based limit and does not exhibit the same trend.}.

\begin{figure}\vspace{-5pt}
	\centering
	{\hspace{5pt}
	\includegraphics[width=0.96\columnwidth]{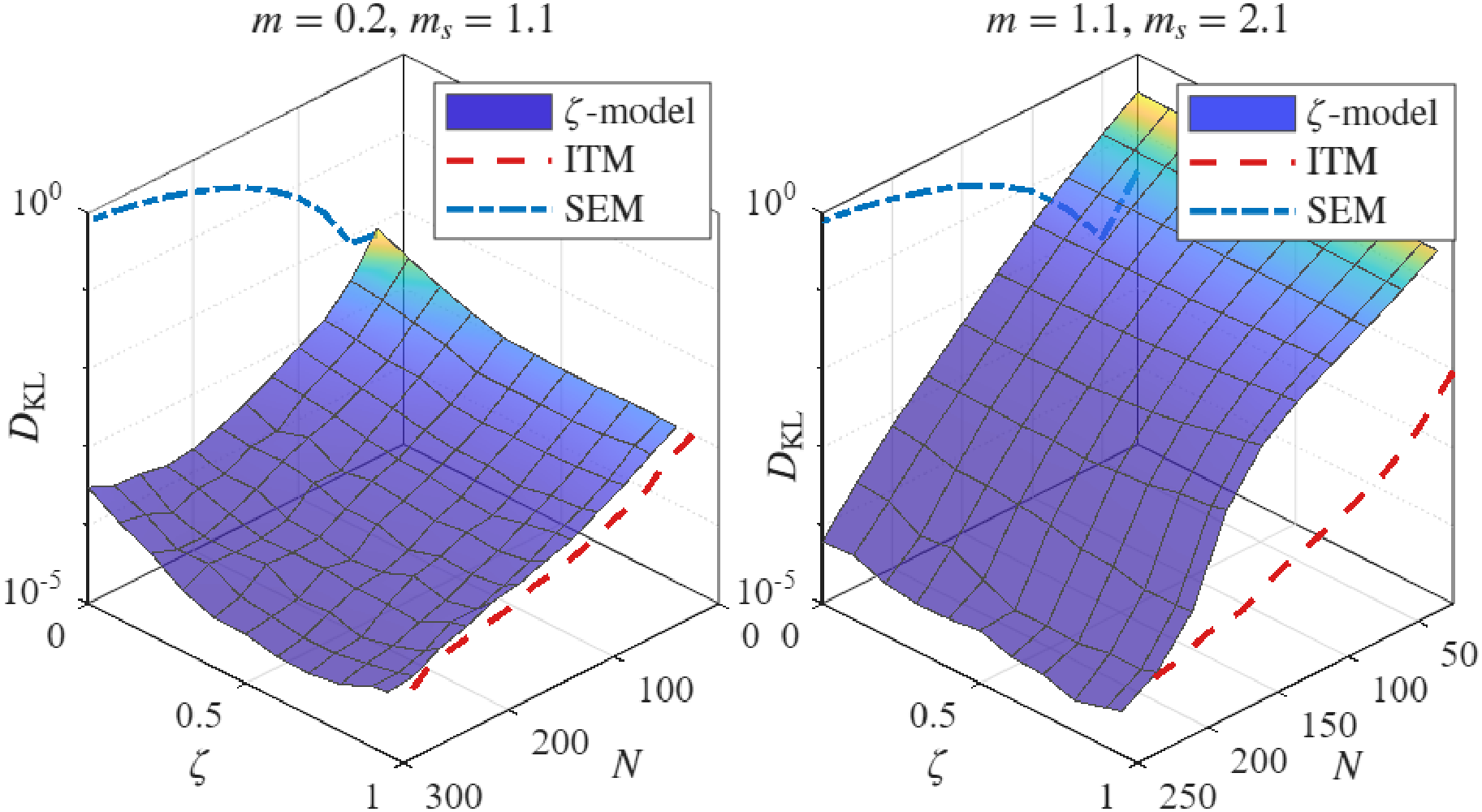}}
	\vspace{-10pt}\caption{$D_{\text{KL}}$ for the $\zeta$-model, ITM, and SEM versus $N$ and $\zeta$ under $1<m_s \leq 2$ and $m_s>2$ with $P_s=0$ dBm.}
	\label{KL_N}\vspace{-12pt}
\end{figure}

\begin{figure}
    \makebox[\columnwidth][c]{%
        \includegraphics[width=.96\columnwidth]{MRE_vs_N_P_-15.eps}%
    }
    \vspace{-10pt} \caption{$\bar{\epsilon}_F$ for the $\zeta$-model, ITM and SEM versus $N$ and $\zeta$ under $m=1.1, m_s=2.1$ with different $\tau_\text{ub}$ and $\tau_\text{lb}$.}
	\label{MRE_N}\vspace{-18pt}
\end{figure}

Figure \ref{MRE_N} depicts the MRE versus $N$ and $\zeta$ for $\zeta$-model, ITM and SEM under $m=1.1, m_s=2.1$ with $\tau_\text{ub}=1e-2,\tau_\text{lb}=1e-3$ and $\tau_\text{ub}=1e-3, \tau_\text{lb}=1e-4$, respectively. 
As observed, the MRE increases markedly as the evaluation window shifts toward smaller values, indicating that the adopted approximation becomes increasingly sensitive when the target region moves deeper into the left tail. Therefore, additional caution is required when the model is applied to reliability evaluation in deep tail regimes. Moreover, the SEM curve intersects with the $\zeta$-model curve at $\zeta=0$ in both subfigures, revealing an evaluation-window-dependent accuracy boundary. 
As the evaluation window moves further into the left tail, this boundary shifts to a larger $N$, suggesting that more RIS elements are needed before the $\zeta$-model overtakes the SEM in relative-error accuracy under stringent tail-evaluation regimes.



\section{Performance Characterization and Analysis}
In this section, we perform a comprehensive performance analysis for the proposed framework by deriving the AC, OP, and EC. 
Specifically,  we first obtain closed-form expressions for the AC of the ITM and SEM, followed by theoretical upper bounds for the unified $\zeta$-Model and boundary cases. Subsequently, we present the theoretical and asymptotic expressions of the OP and EC for the $\zeta$-Model and boundary cases. 
\textcolor{black}{It is noteworthy that these three metrics are selected to provide a multi-dimensional evaluation of the system, in which AC quantifies the long-term ergodic throughput limit for delay-tolerant services, EC measures the maximum achievable rate under stringent QoS delay constraints for real-time applications, and OP characterizes the fundamental link reliability and availability.}

\subsection{Average Capacity}
The AC quantifies the maximum achievable data rate under the assumption of ergodic channel conditions. Mathematically, it is expressed as follows:\vspace{-5pt} 
\begin{equation}\label{AC}
	\begin{aligned}
		\mathrm{AC}=\mathbb{E}[\log_2(1+\gamma)] &= \int_{0}^{\infty} \log_2(1+\gamma)f_{\gamma}(\gamma)d\gamma,
		\end{aligned}
\end{equation}
which is governed by the global distribution. However, it is difficult to obtain the closed-form expression of the AC under the unified $\zeta$-Model through direct integration. To tackle this issue, we first derive closed-form expressions for the tractable boundary cases ITM and SEM, and then provide an upper bound applicable to the entire $\zeta$-domain.

\subsubsection{Analysis for Boundary Cases}
For the ITM with $m_s > 2$, the AC of ITM, denoted by $\mathrm{AC}_{\gamma_1}$, can be evaluated by substituting Eq.~\eqref{pdfgamma1} into Eq.~\eqref{AC} as follows:
\begin{equation}\label{ACGAMMA1}
	\begin{aligned}
		\mathrm{AC}_{\gamma_1} =
		&\Lambda_2 \int_{0}^{\infty}{\log (1+\gamma)\gamma^{k_{z_2}-1}\exp\left(-\frac{\gamma}{\bar{\gamma}\theta_{z_2}}\right)}d\gamma, \vspace{-3pt} 
	\end{aligned}
\end{equation}
where $\Lambda_2 = {1}/\left({\Gamma(k_{z_2})(\bar{\gamma}\theta_{z_2})^{k_{z_2}}\log2}\right)$. 
With the aid of \cite[Eqs.~({07.34.03.0228.01})(07.34.03.0456.01)]{walfram}, Eq.~\eqref{ACGAMMA1} can be further calculated as follows:
\begin{equation}
	\begin{aligned}
		\mathrm{AC}_{\gamma_1}&= \Lambda_2 \int_{0}^{\infty}{ G_{2,2}^{1,2}\left(\gamma \middle|\, \begin{gathered}
				k_{z_2},k_{z_2}\\
				k_{z_2},k_{z_2}-1
			\end{gathered}\right) G_{0,1}^{1,0}\left(\frac{\gamma}{\bar{\gamma}\theta_{z_2}} \Bigg|  0 \right)}d\gamma\\
		&\overset{(a)}{=}\frac{1}{\Gamma(k_{z_2})\log 2}G_{2,3}^{3,1}\left(\frac{1}{\bar{\gamma}\theta_{z_2}} \middle|\, \begin{gathered}
			0,1\\
			0,0,k_{z_2}
		\end{gathered}\right), ~~~~~m_s>2.
	\end{aligned}
\end{equation}
where $G_{p, q}^{m, n}\left( x\Bigg| \\\begin{gathered}
	{a_{1}, \cdots, a_{p}}\\{b_{1}, \cdots, b_{q}}
\end{gathered}\right)$ is the Meijer's G-function \cite[Eq.~(9.301)]{2007Table} and    
$(a)$ results from \cite[Eqs.~(7.811.1)(9.31.5)]{2007Table}. 

For the ITM with $1<m_s\le 2$, by substituting the PDF in Eq.~\eqref{pdfgamma1} into Eq.~\eqref{AC} and introducing the change of variable $\gamma^\prime=\sqrt{\gamma/\bar{\gamma}}/\theta_{z_1}$ and the identity \cite[Eq.~(07.34.03.0456.01)]{walfram}, the AC can be written as follows:
\begin{equation}
\mathrm{AC}_{\gamma_{1}} = \Lambda_1 \int_0^\infty
{\gamma'}^{k_{z_1}-1}e^{-\gamma'}
G_{2,2}^{1,2}\left(
\bar{\gamma}\theta_{z_1}^2{\gamma'}^2
\;\middle|\;
\begin{matrix}
1,1\\
1,0
\end{matrix}
\right)
d\gamma'.
\label{eq:AC_gamma1_integral}
\end{equation} 
where $\Lambda_1 = {1}/({\Gamma (k_{z_{1}} ) \log 2})$. 
Then, by invoking the Mellin-Barnes representation of the Meijer's $G$-function, the order of integration can be exchanged, yielding
\begin{equation}
	\begin{aligned}
\mathrm{AC}_{\gamma_{1}}
&=
\frac{\Lambda_1}{2\pi i}
\int_{\mathcal{L}}
\frac{\Gamma(1-s)\Gamma^2(s)}{\Gamma(1+s)}
(\bar{\gamma}\theta_{z_1}^2)^s\\ 
&~~~~~~~~~~~~~~~~~~~\times\left[
\int_0^\infty
{\gamma'}^{k_{z_1}+2s-1}e^{-\gamma'}d\gamma'
\right]ds.\vspace{-5pt} 
\end{aligned}\label{eq:AC_gamma1_integral2}
\end{equation}
The inner integral in Eq.~\eqref{eq:AC_gamma1_integral2} is the standard gamma integral $
\Gamma(k_{z_1}+2s)$, which can be further expressed as $ {2^{k_{z_1} + 2s - 1}} \Gamma\left(\frac{k_{z_1}}{2} + s\right) \Gamma\left(\frac{k_{z_1} + 1}{2} + s\right)/{\sqrt{\pi}}$ with the aid of the Legendre duplication formula\cite[Eq.~(8.335.1)]{2007Table}.
By recasting the resulting contour integral into the standard Meijer's $G$-function form, we finally obtain\vspace{-3pt} 
\begin{equation}\label{ACGAMMA1_msleq2_final}
\begin{aligned}
\mathrm{AC}_{\gamma_1}
&=
\frac{2^{k_{z_1}-1}}
{\sqrt{\pi}\,\Gamma(k_{z_1})\log 2}
G_{4,2}^{1,4}\!\left(\!
4\bar{\gamma}\theta_{z_1}^{2}
\;\!\middle|\!\;
\begin{matrix}
1,1,\frac{1-k_{z_1}}{2},1-\frac{k_{z_1}}{2}\\
1,0
\end{matrix}
\right)\!,\\&\quad\quad\quad\quad\quad\quad\quad\quad~~~~~~~~~~~~~~~~~~~~~~~~ {1<m_s\le 2}.
	\end{aligned}
\end{equation}



\vspace{-5pt}The AC of SEM, denoted by $\mathrm{AC}_{\gamma_0^s}$, can be derived following a derivation process similar to the ITM as 
\begin{equation}
		\mathrm{AC}_{\gamma_0^s} \!=\!
		\begin{cases}\!
		\begin{aligned}
		&\frac{1}{\Gamma(k_{x_2})\log 2}G_{2,3}^{3,1}\left(\frac{1}{\bar{\gamma}_{_\text{ef}}\theta_{x_2}} \middle|\, \begin{gathered}
			0,1\\
			0,0,k_{x_2}
		\end{gathered}\right), m_s >2;\\
		&\frac{2^{k_{x_1}-1}}
{\sqrt{\pi}\,\Gamma(k_{x_1})\log 2}\times\\ &
 G_{4,2}^{1,4}\!\left(\!
4\bar{\gamma}_{_\text{ef}}\theta_{x_1}^{2}
\;\!\middle|\;\!
\begin{matrix}
1,1,\frac{1-k_{x_1}}{2},1-\frac{k_{x_1}}{2}\\
1,0
\end{matrix}
\!\right)\!, {1<m_s\le 2}.
	\end{aligned}
	\end{cases}
\end{equation}
The gap between the ITM and SEM quantifies the sensitivity of the long-term average-rate prediction to residual-error exploitability. 
A large gap implies that post-acquisition residual exploitation can materially change the predicted average throughput and therefore needs to be taken seriously.

\subsubsection{Upper Bound Analysis for the Unified $\zeta$-Model}

The upper bound of AC, denoted by $AC^{\text{ub}}$, can be calculated using Jensen's inequality as follows:
\begin{equation}\label{Cub}
	\mathrm{AC}\le \mathrm{AC}^{\text{ub}}=\log_2(1+\mathbb{E}[\gamma]).
\end{equation}
For the $\zeta$-Model, by leveraging the independence between $U$ and $\varepsilon_\zeta$ shown in Section II-C, $\mathbb{E}[\gamma_\zeta]$ can be expressed as
\begin{equation}\label{Ezeta}
\mathbb{E}[\gamma_\zeta]=\mathbb{E}\left[\frac{\bar{\gamma}|U|^2}{\bar{\gamma}|\varepsilon_\zeta|^2+1}\right]=\mathbb{E}[|U|^2]\mathbb{E}\left[\frac{1}{|\varepsilon_\zeta|^2+\frac{1}{\bar{\gamma}}}\right].
\end{equation} 
Then, plugging Eq.\eqref{PDF_varepsilon} into Eq.~\eqref{Ezeta},
 $\mathbb{E}[\gamma_\zeta]$ can be calculated as follows:\vspace{-10pt} 
\begin{equation}\label{Ezeta2}
\begin{aligned}
	\mathbb{E}[\gamma_\zeta]&=\mathbb{E}[|U|^2]\int_{0}^\infty \frac{1}{\varepsilon+\frac{1}{\bar{\gamma}}}\frac{\exp\left(-\frac{\varepsilon}{\sigma_{\varepsilon_\zeta}^2}\right)}{\sigma_{\varepsilon_\zeta}^2} d\varepsilon\\
	&\overset{(a)}=\frac{\mathbb{E}[|U|^2]}{\sigma_{\varepsilon_\zeta}^2}\exp\left(\frac{1}{\sigma_{\varepsilon_\zeta}^2\bar{\gamma}}\right)\int_{\frac{1}{\sigma_{\varepsilon_\zeta}^2\bar{\gamma}}}^\infty \frac{\exp\left(-\varepsilon\right)}{\varepsilon} d\varepsilon\\
	&\overset{(b)}=-\frac{\mathbb{E}[|U|^2]}{\sigma_{\varepsilon_\zeta}^2}\exp(\frac{1}{\sigma_{\varepsilon_\zeta}^2\bar{\gamma}})\mathrm{Ei}(-\frac{1}{\sigma_{\varepsilon_\zeta}^2\bar{\gamma}}),\vspace{-5pt}
\end{aligned}
\end{equation}
where $\mathbb{E}[|U|^2]$ is provided from Appendix \ref{AppendixA} as mentioned above, $(a)$ results from variable substitution, and $(b)$ is obtained by utilizing the one-argument exponential integral function $\mathrm{Ei}(x)$
\cite[Eq.~(8.211.1)]{2007Table}. Then, by substituting Eq. \eqref{Ezeta2} into Eq. \eqref{Cub}, $AC^{\text{ub}}_\zeta$ can be obtained.

For the ITM and SEM, the expectation of $\gamma_1$ and $\gamma_0^s$ can be directly calculated according to Appendix \ref{AppendixA} without the special function $\mathrm{Ei}(x)$, respectively, as follows:\vspace{-2pt}
\begin{equation}\label{Egamma_ITM_SEM}
	\mathbb{E}[\gamma_1] = \bar{\gamma}\mathbb{E}[|Z|^2],~~~~\mathbb{E}[\gamma_0^s] = \bar{\gamma}_{_\text{ef}}\mathbb{E}[|X|^2].\vspace{-3pt}
\end{equation}
For the EBM, Eq.~\eqref{Ezeta2} will reduce to $\mathbb{E}[\gamma_0]$ by setting $\zeta=0$ and replacing $\sigma_{\varepsilon_\zeta}^2 $ with $\sigma_E^2$ and $|U|^2$ with $|X|^2$. By substituting the expectations into Eq. \eqref{Cub}, $AC^{\text{ub}}$ for the ITM, EBM, and SEM can be derived. 
The upper-bound expressions place the ITM, EBM, and SEM, and intermediate-$\zeta$ model on the same analytical scale, which makes their AC gaps directly comparable. This provides a convenient way to assess how conservative the engineering baseline is for throughput-oriented system evaluation.

\vspace{-5pt}
\subsection{Outage Probability}
The OP, denoted by $\mathbb{P}$, is defined as the probability that the instantaneous SNR falls below a predetermined outage threshold $\gamma_{\text{th}}$. 
The OP for the unified $\zeta$-Model, denoted by $\mathbb{P}_{\gamma_\zeta}$, can be directly expressed as follows:
\vspace{-5pt}\begin{equation}
\mathbb{P}_{\gamma_\zeta}=\text{Pr}\left[\gamma_\zeta<\gamma_{\text{th}}\right]=F_{\gamma_\zeta}\left(\gamma_{\text{th}}\right),
\end{equation}
which applies to arbitrary $\zeta\in[0,1]$. Nevertheless, for reliability-oriented engineering design, we place particular emphasis on the $\zeta=0$ boundary, since it corresponds to the practical engineering baseline in which the residual CSI uncertainty is entirely treated as detrimental impairment at the receiver. Therefore, by using the derived CDFs of the EBM and SEM, the corresponding OPs, denoted by $\mathbb{P}_{\gamma_0}$ and $\mathbb{P}_{\gamma_0^s}$, respectively, can be directly obtained as
\vspace{-2pt}
\begin{equation}\label{op}
	\mathbb{P}_{\gamma_0} = F_{\gamma_0}(\gamma_{\text{th}}),~~~~\mathbb{P}_{\gamma_0^s}=F_{\gamma_0^s}(\gamma_{\text{th}}). \vspace{-2pt}
\end{equation}
Different from the AC depending on the entire distribution, the OP is fully determined by the left-tail CDF at the outage threshold. Therefore, accurate left-tail characterization is particularly critical for OP analysis.
To further reveal the reliability limit under CSI imperfections, we next investigate the high-SNR asymptotic behaviors.

\begin{proposition}\label{op_proposition}
In this proposition, we consider \(0 \leq \zeta < 1\), for which the unexplained residual component has a nonzero variance. The boundary case \(\zeta=1\) reduces to the ITM and does not exhibit the same impairment-limited nonzero outage floor. As $\bar{\gamma}\to\infty$, the instantaneous SNR of the unified $\zeta$-Model satisfies $\mathbb{P}_{\gamma_\zeta}^{\text{asy}}=\text{Pr}\left[|U|^2/{|\varepsilon_\zeta|^2}<\gamma_{\text{th}}\right]$. For $m_s > 2$, by plugging Eqs.~\eqref{CDF_U2} and \eqref{PDF_varepsilon} and using \cite[Eq. (6.451.1)]{2007Table}, the asymptotic OP can be derived as follows:\vspace{-5pt}
		\begin{equation}\label{OP_asy_gt2}  \mathbb{P}_{\gamma_\zeta}^{\text{asy}}=\left( 1 + \frac{\theta_{u_2}}{\sigma_{\varepsilon_\zeta}^2 \gamma_{\text{th}}} \right)^{-k_{u_2}},
		\quad m_s>2.
	\end{equation}
	For $1<m_s \leq 2$, the asymptotic OP can be rewritten as
\begin{equation}
	\begin{aligned}\label{OP_asy_le2}
&\mathbb{P}_{\gamma_\zeta}^{\text{asy}}=\text{Pr}\left[|\varepsilon_\zeta|>\frac{|U|}{\sqrt{\gamma_\text{th}}}\right]=
\int_{0}^\infty \widetilde{F}_{|\varepsilon_\zeta|}\left(\frac{u}{\sqrt{\gamma_{\text{th}}}}\right)f_{|U|}(u) du,\\
&\overset{(a)}{=}\frac{1}{\Gamma(k_{u_1})\theta_{u_1}^{k_{u_1}}}
\int_0^\infty
u^{k_{u_1}-1}
\exp\!\left(
-\frac{u}{\theta_{u_1}}
-\frac{u^2}{\gamma_{\text{th}}\sigma_{\varepsilon_\zeta}^2}
\right)du\\
&\overset{(b)}{=}\left(
\frac{\gamma_{\text{th}}\sigma_{\varepsilon_\zeta}^2}{4\theta_{u_1}^2}
\right)^{\frac{k_{u_1}}{2}}
U\!\left(
\frac{k_{u_1}}{2},
\frac{1}{2},
\frac{\gamma_{\text{th}}\sigma_{\varepsilon_\zeta}^2}{4\theta_{u_1}^2}
\right),
\qquad 1<m_s\le 2,\vspace{-3pt}
	\end{aligned}
\end{equation}
where $U(a,b,z)$ denotes the confluent hypergeometric Kummer U function\cite[Eq.~(9.211.4)]{2007Table}, $(a)$ is obtained since $|\varepsilon_\zeta|$ follows Rayleigh distribution, and $(b)$ results from \cite[Eq.~(3.462.1)]{2007Table} and \cite[Eq.~(07.41.26.0007.01)]{walfram}. 
For the EBM, the asymptotic OP follows directly from Proposition~\ref{op_proposition} by replacing $\sigma_{\varepsilon_\zeta}^2$ with $\sigma_E^2$ and $\{k_{u_2},\theta_{u_2}\}$ or $\{k_{u_1},\theta_{u_1}\}$ with $\{k_{x_2},\theta_{x_2}\}$ or $\{k_{x_1},\theta_{x_1}\}$, respectively.

The diversity order of the unified $\zeta$-Model is defined as $G_d
= -\displaystyle\lim_{\bar{\gamma}\to\infty} 
{\log \mathbb{P}^{\text{asy}}_{\gamma_\zeta}(\bar{\gamma})}/{\log \bar{\gamma}}$. 
Since the asymptotic OP in both Eqs.~\eqref{OP_asy_gt2} and \eqref{OP_asy_le2} converges to a non-zero constant independent of $\bar{\gamma}$ for \(0\leq\zeta<1\), the diversity order is
$G_d=0$.
\end{proposition}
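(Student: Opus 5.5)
The argument has four steps: pass to the high-SNR limit of $\gamma_\zeta$ in Eq.~\eqref{zetamodel_defination}, evaluate the resulting ratio probability in each shadowing regime by conditioning on $|U|^2$ or $|U|$, and read off the diversity order.

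\emph{High-SNR limit.} Write $\gamma_\zeta=|U|^2/(|\varepsilon_\zeta|^2+1/\bar{\gamma})$. For $0\le\zeta<1$ we have $\sigma_{\varepsilon_\zeta}^2=(1-\zeta)\sigma_E^2>0$, so $|\varepsilon_\zeta|^2>0$ almost surely. Hence $\gamma_\zeta\to|U|^2/|\varepsilon_\zeta|^2$ pointwise and monotonically increasing as $\bar{\gamma}\to\infty$. The statistics of $U$ (the $k$'s and $\theta$'s from Appendix~\ref{AppendixA}) do not depend on $\bar{\gamma}$. The event $\{\gamma_\zeta<\gamma_{\text{th}}\}$ therefore decreases to $\{|U|^2/|\varepsilon_\zeta|^2<\gamma_{\text{th}}\}$, up to the null boundary set. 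Monotone convergence of probabilities then gives $\mathbb{P}_{\gamma_\zeta}\to\mathbb{P}^{\text{asy}}_{\gamma_\zeta}$.

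As a cross-check, one can instead take $\bar{\gamma}\to\infty$ directly in Eq.~\eqref{CDF_GAMMAZETA}. There $\Gamma(k,\cdot/\bar{\gamma})\to\Gamma(k)$ and the exponential prefactor tends to $1$, so the $m_s>2$ branch reduces to $\left(\sigma_{\varepsilon_\zeta}^2\gamma_{\text{th}}/(\theta_{u_2}+\sigma_{\varepsilon_\zeta}^2\gamma_{\text{th}})\right)^{k_{u_2}}$, which is the same as Eq.~\eqref{OP_asy_gt2}.

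\emph{Case $m_s>2$.} Using the decoupling $U\perp\varepsilon_\zeta$, condition on $|U|^2=u$. This gives
\[
\Pr[|\varepsilon_\zeta|^2>u/\gamma_{\text{th}}]=\exp\!\left(-u/(\gamma_{\text{th}}\sigma_{\varepsilon_\zeta}^2)\right),
\]
from Eq.~\eqref{CDF_V}. Averaging over the Gamma PDF in Eq.~\eqref{PDF_U2} leaves the integral $\int_0^\infty u^{k-1}e^{-u(1/\theta_{u_2}+1/(\gamma_{\text{th}}\sigma_{\varepsilon_\zeta}^2))}du$. Evaluating it with \cite[Eq.~(6.451.1)]{2007Table} (equivalently, the Gamma integral) yields $\Gamma(k_{u_2})\left(1/\theta_{u_2}+1/(\gamma_{\text{th}}\sigma_{\varepsilon_\zeta}^2)\right)^{-k_{u_2}}$. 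Multiplying by the normalisation $1/(\Gamma(k_{u_2})\theta_{u_2}^{k_{u_2}})$ gives Eq.~\eqref{OP_asy_gt2}.

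\emph{Case $1<m_s\le2$.} Here only the law of $|U|$ is available, so condition on $|U|=u$. Since $\varepsilon_\zeta$ is CSCG, $|\varepsilon_\zeta|$ is Rayleigh, and its complementary CDF is $\widetilde{F}_{|\varepsilon_\zeta|}(x)=e^{-x^2/\sigma_{\varepsilon_\zeta}^2}$. This produces integral $(a)$ against the PDF in Eq.~\eqref{PDF_U}.

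This Gaussian-times-Gamma integral is the main obstacle. I would evaluate it with \cite[Eq.~(3.462.1)]{2007Table}, which gives $(2\beta)^{-\nu/2}\Gamma(\nu)e^{\gamma^2/(8\beta)}D_{-\nu}(\gamma/\sqrt{2\beta})$ with $\nu=k_{u_1}$, $\beta=1/(\gamma_{\text{th}}\sigma_{\varepsilon_\zeta}^2)$ and $\gamma=1/\theta_{u_1}$. I would then convert the parabolic cylinder function into Kummer form using $D_{-\nu}(z)=2^{-\nu/2}e^{-z^2/4}U(\nu/2,1/2,z^2/2)$. The exponentials cancel, and $z^2/2=\gamma_{\text{th}}\sigma_{\varepsilon_\zeta}^2/(4\theta_{u_1}^2)$.

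The step that needs care is collecting the powers of $2$, $\theta_{u_1}$ and $\gamma_{\text{th}}\sigma_{\varepsilon_\zeta}^2$. They should combine into $z'^{k_{u_1}/2}U(k_{u_1}/2,1/2,z')$ with $z'=\gamma_{\text{th}}\sigma_{\varepsilon_\zeta}^2/(4\theta_{u_1}^2)$. I would sanity-check the result by confirming that it tends to $1$ as $\gamma_{\text{th}}\to\infty$, using $U(a,b,z)\sim z^{-a}$.

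\emph{EBM and diversity order.} The EBM statement follows by substituting $\zeta=0$, so that $\sigma_{\varepsilon_\zeta}^2=\sigma_E^2$ and $U=X$. For the diversity order, both expressions depend only on $\gamma_{\text{th}}$, $\sigma_{\varepsilon_\zeta}^2$ and the moment-matched parameters, none of which involve $\bar{\gamma}$. Both are also strictly positive, since the integrands are positive. Hence $\log\mathbb{P}^{\text{asy}}$ is a finite constant, and dividing by $\log\bar{\gamma}\to\infty$ gives $G_d=0$.
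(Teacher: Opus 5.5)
Your proposal is correct and follows essentially the paper's route. The shared steps are: reduction to $\Pr[|U|^2/|\varepsilon_\zeta|^2<\gamma_{\text{th}}]$ at high SNR (which you justify more carefully via monotone convergence, plus the limit check on Eq.~\eqref{CDF_GAMMAZETA}); the Rayleigh CCDF with \cite[Eq.~(3.462.1)]{2007Table} and the $D_{-\nu}$-to-Kummer-$U$ conversion for $1<m_s\le 2$; and $\bar{\gamma}$-independence of a strictly positive limit, which gives $G_d=0$. The only difference is an order-of-integration choice for $m_s>2$, equivalent by Fubini: you condition on $|U|^2$ and get the Gamma Laplace transform directly, so \cite[Eq.~(6.451.1)]{2007Table} is not actually the formula you use, whereas the paper conditions on $|\varepsilon_\zeta|^2$ and integrates the lower incomplete gamma CDF of $|U|^2$ against the exponential PDF via that formula.
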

\begin{remark}
The above result indicates a transition from a power-limited regime to an impairment-limited regime. In the former, increasing $\bar{\gamma}$ remains beneficial, whereas in the latter, the outage floor is fundamentally determined by the non-exploitable residual error and moment-matching parameters. Hence, once the system approaches this floor, improving channel prediction, residual error exploitation, or adding reflecting paths is more effective than further increasing $\bar{\gamma}$.	
\end{remark}
\vspace{-10pt}



\subsection{Effective Capacity}
To address the diverse QoS requirements in EWC scenarios, we adopt the statistical delay-bounded QoS exponent, $\theta~(\theta>0)$, which represents the exponential decay rate of the delay-bound violation probability. A larger $\theta$ (a faster decay rate) indicates a stringent QoS requirement, while a smaller $\theta$ (a smaller decay rate) implies a more relaxed QoS standard. 

According to \cite{ECCHENG}, the normalized EC of the service process for a given $\theta$, denoted by $\mathrm{EC}$,  can be expressed as follows:
\begin{equation}\label{EC}  
	\begin{aligned}
	\mathrm{EC} & = -\frac{1}{\beta}\log_2 \left(\mathbb{E}_\gamma\left[e^{-\theta R}\right]\right)= -\frac{1}{\beta}\log_2 \left(\mathbb{E}_\gamma\left[\left(1+\gamma\right)^{-\beta} \right]\right),
	\end{aligned}
\end{equation}
where $R=BT\log_2 (1+\gamma)$ denotes the amount of service in bits over a block duration $T$, $B$ is the system spectral bandwidth, and $\beta=BT\theta/\log 2$ represents the normalized QoS exponent. 

While Eq.~\eqref{EC} allows for the numerical evaluation of QoS performance for any $\zeta$, deriving a closed-form solution for the general case is mathematically intractable due to the complex structure of $F_{\gamma_\zeta}$. 
From a robust engineering perspective, to establish the guaranteed performance baseline, we first focus our analytical derivation on the SEM with $\zeta=0$. 
As $m_s > 2$, $\mathbb{E}[(1+\gamma_0^s)^{-\beta}]$ can be directly obtained by substituting $f_{\gamma_0^s}(\gamma)$ in Eq.~\eqref{pdfgamma2s} into Eq.~\eqref{EC} and introducing the change of variable on the basis of $U(a,b,z)$. Therefore, the EC of SEM, denoted by $\mathrm{EC}_{\gamma_0^s}
$, can be expressed as follows:
\vspace{-5pt}
\begin{equation}\label{EC2s_msgt2} 
		\begin{aligned}
	\mathrm{EC}_{\gamma_0^s}=
		&-\frac{1}{\beta} \left[\log_2\left( U(k_{x_2},k_{x_2}+1-\beta,\frac{1}{\bar{\gamma}_{_\text{ef}}\theta_{x_2}})\right)\right.\\
		&~~~~~~~~~~~~~~~~~~~~~~~\left.-k_{x_2}\log_2(\bar{\gamma}_{_\text{ef}}\theta_{x_2})\right], \quad m_s >2.
	\end{aligned}
\end{equation}
In the following, we define $\mathcal{E}_{\gamma}=\mathbb{E}[(1+\gamma)^{-\beta}]$ for simplicity. For $1<m_s \leq 2$, by plugging $f_{\gamma_0^s}(\gamma)$ and applying $t = \sqrt{{\gamma}}/({\bar{\gamma}_{_\text{ef}}}{\theta_{x_1}})$, the expectation term $\mathcal{E}_{\gamma_{0}^s}$ 
can be rewritten as 
\begin{equation}\label{Integral_x}
	\begin{aligned}
    \mathcal{E}_{\gamma_{0}^s} &= \frac{1}{\Gamma(k_{x_1})} \int_0^\infty
t^{k_{x_1}-1}e^{-t}
\left(1+\bar{\gamma}_{_\text{ef}}\theta_{x_1}^2 t^2\right)^{-\beta}
dt.
\end{aligned}
\end{equation}
Next, using the Mellin-Barnes representation\cite[Eqs.~(9.113) and (9.121.1)]{2007Table}, the binomial term $\left(1+\bar{\gamma}_{_\text{ef}}\theta_{x_1}^2 t^2\right)^{-\beta}$ can be rewritten as $
\int_{\mathcal L} \Gamma(s)\Gamma(\beta-s) (\bar{\gamma}_{_\text{ef}}\theta_{x_1}^2)^{-s} ds / {2\pi i}$.
By exchanging the order of integration and applying the nature of the gamma function \cite[ Eq.~(8.310.1)]{2007Table}, Eq.~\eqref{Integral_x} becomes
\begin{equation}
\frac{1}{\Gamma(k_{x_1})\Gamma(\beta)}
\frac{1}{2\pi i}
\int_{\mathcal L} \Gamma(s)\Gamma(\beta-s) \Gamma(k_{x_1}-2s) (\bar{\gamma}_{_\text{ef}}\theta_{x_1}^2)^{-s} ds.
\end{equation}
Then, by applying the Legendre duplication formula\cite[Eq. (8.335.1)]{2007Table} to $\Gamma(k_{x_1}-2s)$ and recasting the resulting contour integral into the standard Meijer's $G$-function form\cite[Eq.~(9.301)]{2007Table}, $\mathrm{EC}_{\gamma_0^s}$ is finally obtained as follows:
\begin{equation}\label{EC2s_msleq2}
	\begin{aligned}
	&\mathrm{EC}_{\gamma_0^s}
=
-\frac{1}{\beta}
\log_2
\left[
\frac{2^{k_{x_1}-1}}
{\sqrt{\pi}\,\Gamma(k_{x_1})\Gamma(\beta)}\right.\\
&\left. \times G_{3,1}^{1,3}\left(
4\bar{\gamma}_{_\text{ef}}\theta_{x_1}^2
\;\middle|\;
\begin{matrix}
1-\beta,\;1-\frac{k_{x_1}}{2},\;\frac{1-k_{x_1}}{2}\\
0
\end{matrix}
\right)
\right], ~ {1<m_s\le 2}.	
	\end{aligned}
\end{equation}
Although the approximate closed-form expressions provide a tractable engineering reference, they are still derived from a simplified alternative model. To capture the high-SNR limit, we derive the
asymptotic EC for the unified $\zeta$-Model.

\begin{proposition}
	For \(0\leq\zeta<1\), the asymptotic expression of the EC for the unified $\zeta$-Model, denoted by $\mathrm{EC}_{{\gamma}_\zeta}^{\text{asy}}$, can be expressed as 
	\begin{equation}\label{EC2ub}  
		\mathrm{EC}_{{\gamma}_\zeta}^{\text{asy}}\!=\!
		\begin{cases}\!
		\begin{aligned}
			&-\frac{1}{\beta} \log_2 \left[\frac{\Gamma(k_{u_2}+1)\Gamma(\beta+1)}{\Gamma(k_{u_2}+\beta+1)}\right.\times\\ & \left.{}_2 F_1\left(\beta,k_{u_2}; k_{u_2}+\beta+1; 1-\frac{\theta_{u_2}}{\sigma_{\varepsilon_\zeta}^2}\right) \right], m_s > 2; \\
			&-\frac{1}{\beta} \log_2 \left[ \frac{2^{k_{u_1}-1}}{\sqrt{\pi}\Gamma(k_{u_1})\Gamma(\beta)}\right.\times\\
			&\! \left.
			G_{3,2}^{2,3}\!\left(\!
			\frac{4\theta_{u_1}^2}{\sigma_{\varepsilon_\zeta}^2}
			\;\!\middle|\;\!
			\begin{matrix}
			1-\beta,\!1-\frac{k_{u_1}}{2},\!\frac{1-k_{u_1}}{2}\\
			0, 1
			\end{matrix}
			\!\right)\!\right]\!, {1<m_s \leq 2}.
		\end{aligned}	
		\end{cases}
	\end{equation}\vspace{-10pt}	
	\begin{IEEEproof} 
		See Appendix \ref{AppendixC}.
	\end{IEEEproof}	
	\end{proposition}
The asymptotic EC indicates that, when a nonzero unexplained residual component remains, the maximum achievable QoS-constrained rate plateaus at high SNR.
\begin{remark}\label{remarkEC}
Although EC depends on the entire distribution of $\gamma$ which is similar to AC, the weighting factor $(1+\gamma)^{-\beta}$ assigns increasingly larger emphasis to the low-SNR region as $\beta$ grows.  
Combined with the observation in Fig.~\ref{CDF} that increasing \(\zeta\) shifts the SNR distribution to the right and amplifies the approximation error in a fixed left-tail regime, we infer that a larger \(\zeta\) can improve the delay-limited performance, but may reduce the analytical accuracy of EC evaluation.
\end{remark} 
Although optimization is beyond the scope of this paper, the derived expressions can be directly used as analytical objective/constraint evaluators, thus avoiding repeated Monte Carlo simulations and supporting future design of UAV deployment, RIS configuration, and power allocation under capacity, outage, and QoS constraints.

\section{Numerical Results and Discussions}
\begin{figure}
{\includegraphics[width=.95\columnwidth]{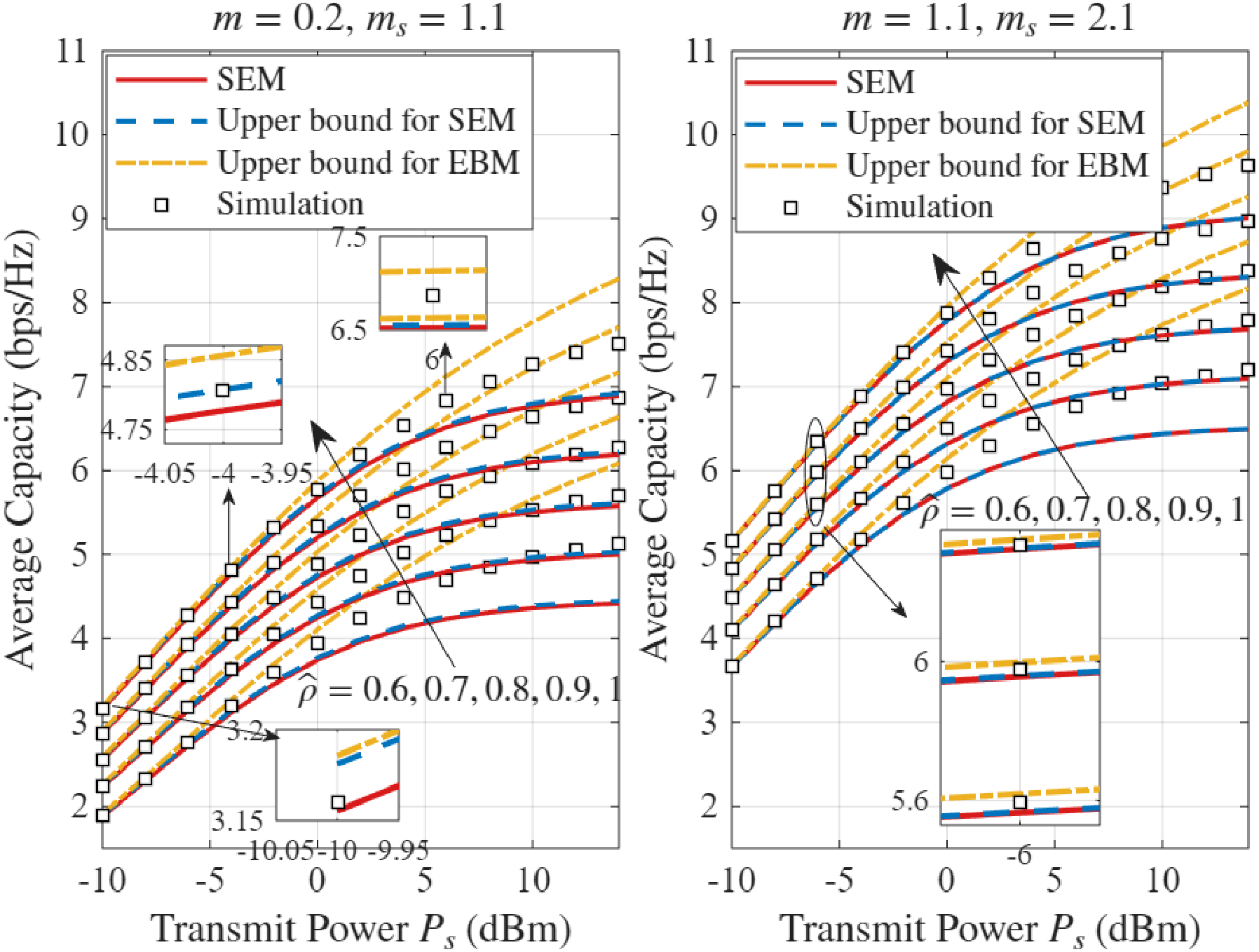}}
	\vspace{-0.2cm}\caption{Average capacity versus $P_s$ corresponding to $\widehat{\rho}$ for the case of $\zeta=0$ under different $m_s$ regimes.}\vspace{-10pt}
	\label{AC_vs_Ps_diff_rho}
\end{figure}
In this section, Monte Carlo simulations are used to validate the developed analytical results. We examine not only the two boundary cases, i.e., $\zeta=0$ and $\zeta=1$, but also the intermediate $\zeta$ values of the unified $\zeta$-Model, in order to verify the validity of the proposed framework and to quantify the performance gap between the boundary benchmarks. 
Unless otherwise stated, two representative fading-shadowing settings are considered: $\left\{ m,m_s\right\}=\left\{0.2,1.1\right\}$ for the regime of $1<m_s\le 2$ and $\left\{m,m_s\right\}=\left\{1.1,2.1\right\}$ for the regime of $m_s>2$, representing severe and moderate fading-shadowing conditions, respectively. This setup confirms that the proposed analysis remains effective across both analytical regimes.

Unless otherwise specified, the main parameters are given as follows. The UE and high-altitude UAV-BS are located at $\mathbf{q_0}=(0,0,0)$ and $\mathbf{q_b}=(300,300,1000)$, respectively. Three low-altitude R-UAVs hovering at the height of $100$ m are uniformly distributed on a $150$ m radius circle centered at the UE. The path loss parameters are set as 
$\alpha=2$, $a=4.88$, $b=0.4472$, $\eta_{\text{LoS}}=0.1$ dB, and $\eta_{\text{NLoS}}=20$ dB. The wavelength is $\lambda=$0.15 m. The communication system operates with SNR threshold  $\gamma_\text{th}=$ 10 dB and noise power $N_0$= -94 dBm. The values of $B$ and $T$ are set to 20 MHz and 2 ms, respectively. The number of quantization bits of each R-UAV is set to $q_k=3, \forall k$. 
For channels with outdated and estimation error, we set $\widehat{\rho}=\widehat{\rho}_d=\widehat{\rho}_{k,n}=\widetilde{\rho}=\widetilde{\rho}_d=\widetilde{\rho}_{k,n}=0.8, \forall k,n$. For channel power normalization, we set $\Omega^d=\Omega^{k,n}=\widehat{\sigma}^2=\widehat{\sigma}_d^2=\widehat{\sigma}_{k,n}^2=\widetilde{\sigma}^2=\widetilde{\sigma}_d^2=\widetilde{\sigma}_{k,n}^2=1$. 
The truncation orders of the Gauss-Laguerre, Gauss-Hermite, and Gauss-Jacobi quadratures are set as
\(N_q=N_h=N_J=20\). For a given quadrature order, the associated nodes and weights are deterministic and can be obtained from standard quadrature tables or built-in numerical routines.

Figure~\ref{AC_vs_Ps_diff_rho} shows the AC results versus transmit power $P_s$ corresponding to different outdated-CSI correlation coefficients $\widehat{\rho}$ under different channel conditions for the $\zeta=0$ case, including the closed-form expression of the SEM and the upper bounds of the EBM and SEM. In both subfigures, the closed-form SEM curves remain close to the corresponding upper bounds, which confirms the tightness of the bound for the simplified model. 
As $P_s$ or $\widehat{\rho}$ increases, the simulation results first approach the theoretical and upper bound SEM curves, then gradually approach the upper bound EBM curve. Additionally, the gap between all theoretical curves will be reduced and converge to the simulation results with the increase of $\widehat{\rho}$ and $N$. However, all the derived results are unable to accurately fit the simulations at extremely high-SNR regimes.
Therefore, these results indicate that the theoretical SEM and its upper bound can provide a reasonably accurate approximation in low-SNR or high-error regimes, while the upper bound of EBM should be adopted in moderate-SNR or low-error scenarios.

\begin{figure}
	\centering
	{\includegraphics[width=0.91\columnwidth]{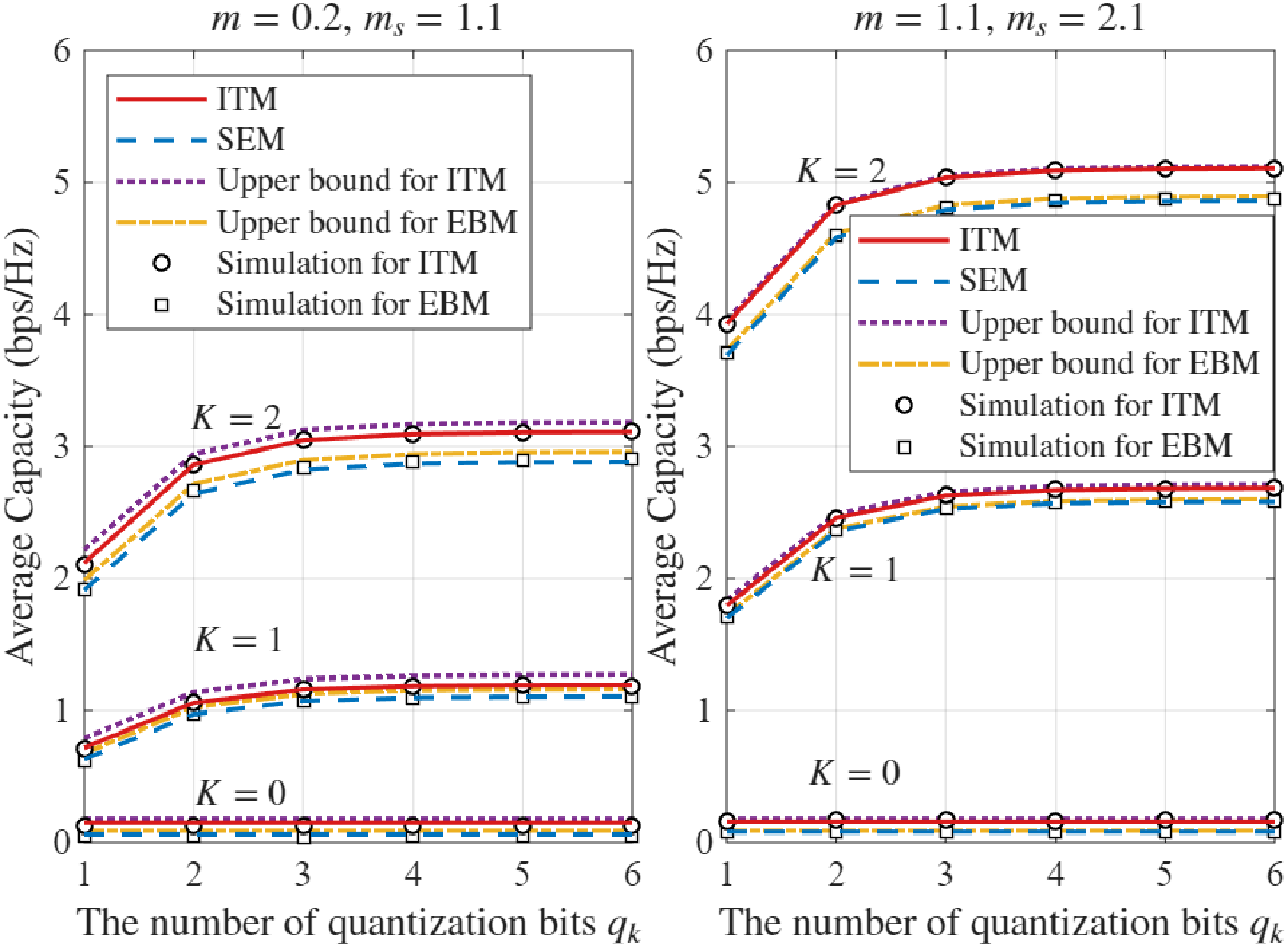}}
	\vspace{-5pt}\caption{Average capacity versus $q_k$ corresponding to different $K$ with $\zeta=0$ and $\zeta=1$.}\vspace{-15pt}
	\label{AC_MODEL1_2_qtp}
\end{figure}


Figure~\ref{AC_MODEL1_2_qtp} shows the AC versus the RIS quantization resolution $q_k$ for different numbers of R-UAVs $K$, where the closed-form expressions of ITM and SEM are compared with their corresponding upper bounds and simulations. When $K=0$, all curves become almost insensitive to $q_k$, because no reflected link is available and the RIS quantization resolution becomes irrelevant. 
For $K=1$ and $K=2$, the AC increases rapidly when $q_k$ grows from 1 to about 3 bits, and then gradually saturates, showing that a moderate quantization resolution is sufficient to capture most of the potential performance benefits and offering an attractive trade-off between system performance and the hardware complexity of the RIS. 
It is also observed that increasing $K$ significantly improves the AC, and the gain becomes more pronounced under milder fading/shadowing conditions. Additionally, the gap between the ITM and SEM enlarges with $K$, indicating that stronger reflected components bring larger gains caused by receiver-side exploitation of the residual error. 

\begin{figure}
	\centering
	{\includegraphics[width=0.93\columnwidth]{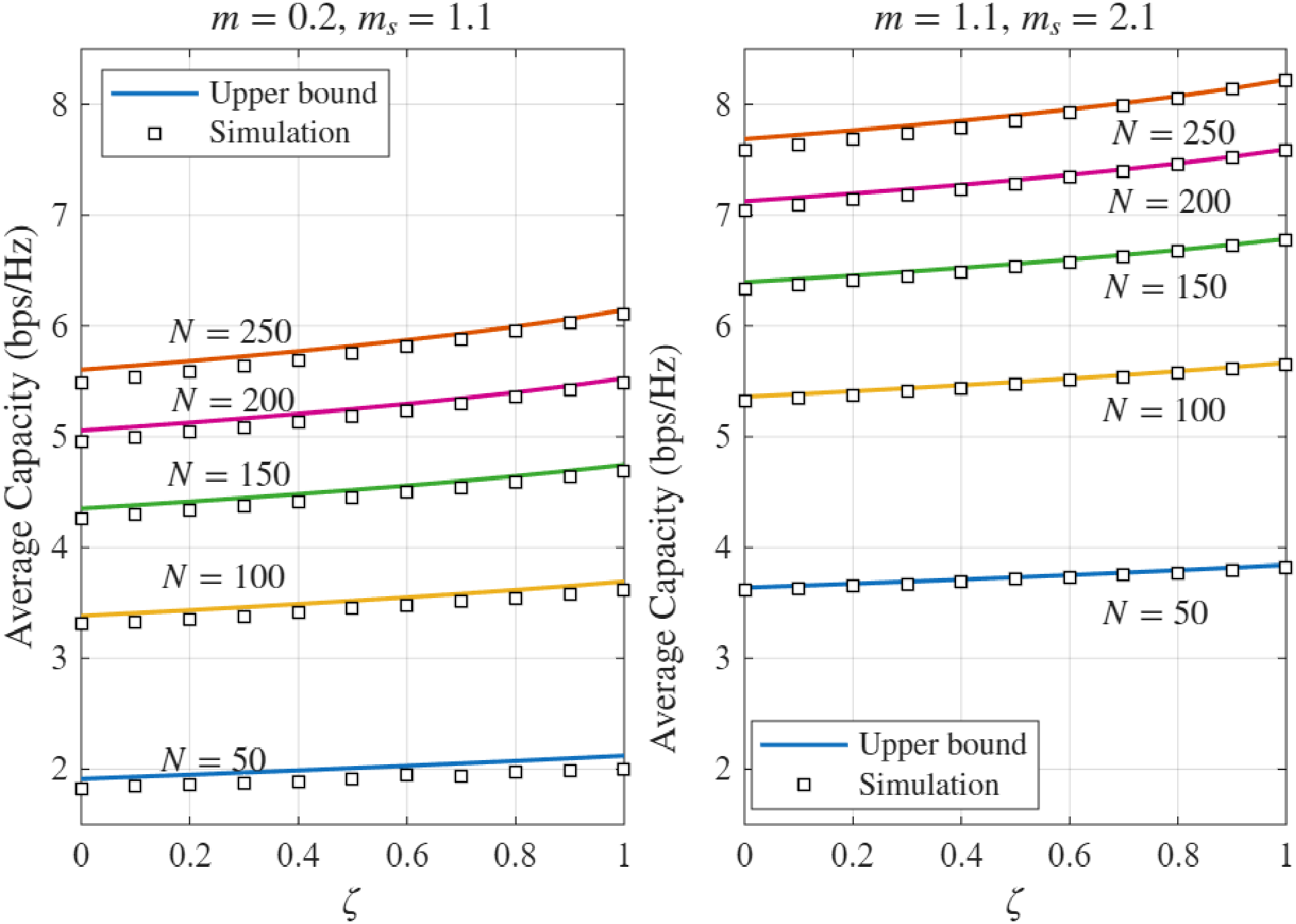}}
	\vspace{-5pt}\caption{{Average capacity versus $\zeta$ corresponding to different $N$ for $\zeta$-model.}}\vspace{-5pt}
	\label{AC_MODEL1_2}
\end{figure}

Figure~\ref{AC_MODEL1_2} depicts the AC versus $\zeta$ for different numbers of RIS elements $N$, where the upper bound and the simulation results of the unified $\zeta$-Model are compared. For the considered parameter range, the AC increases monotonically with $\zeta$ for all considered $N$, and the upper-bound curves remain close to the simulation results over the entire $\zeta$ range. 
Additionally, with an increase of $N$, AC continues to increase and the performance gain from increasing $\zeta$ becomes more visible. This suggests that the benefit of receiver-side residual-error exploitation is amplified when the reflected links are stronger. 

\begin{figure}
	\centering
	\includegraphics[width=1\columnwidth]{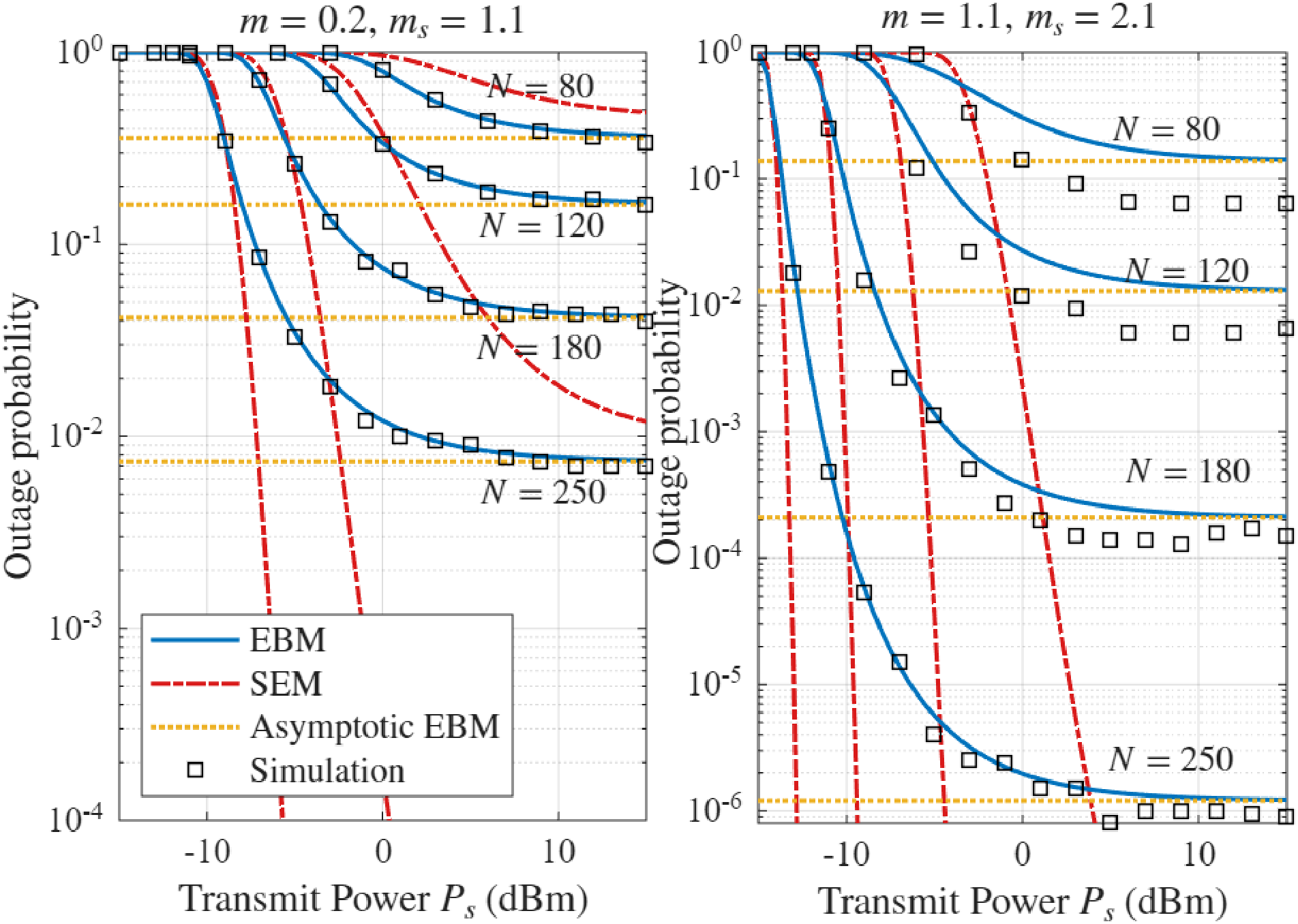}
	\vspace{-10pt}\caption{{Outage probability versus $P_s$ corresponding to different $N$ for EBM and SEM.}}\vspace{-10pt}
	\label{OP_Ps_N}
\end{figure}

Figure~\ref{OP_Ps_N} presents the OP versus $P_s$ with different $N$ for $\zeta=0$, comparing the closed-form OPs of EBM and SEM, and the asymptotic OP of EBM against the simulations. In both subfigures, the OP decreases rapidly with $P_s$ in the low and moderate $P_s$ regimes, then approaches a nonzero floor as $P_s$ becomes sufficiently large. Moreover, increasing $N$ consistently shifts the outage curves downward and lowers the corresponding floor. 
It is also observed that both the closed-form and asymptotic EBM curves fit simulations more accurately than the SEM in the left subfigure, while the discrepancy between the EBM-based results and the simulations becomes more visible at high $P_s$ in the right subfigure. The reason is that the lighter fading/shadowing condition leads to smaller OPs under the same $\gamma_{\text{th}}$, so that the corresponding OP evaluation probes a much more extreme left-tail portion of the SNR distribution. Since the OP in this regime is governed mainly by the local tail behavior, even a mild approximation mismatch will be noticeably amplified on the logarithmic scale.
Nevertheless, the proposed approximation remains useful for identifying the performance ordering, the outage floor, and the regime where increasing $P_s$ is no longer the most efficient means of reliability improvement.
The performance gap between the two channel conditions further shows that milder channel conditions lead to faster outage decay and a lower asymptotic floor, especially for larger $N$ cases.

\begin{figure}
	\centering
	{\includegraphics[width=.99\columnwidth]{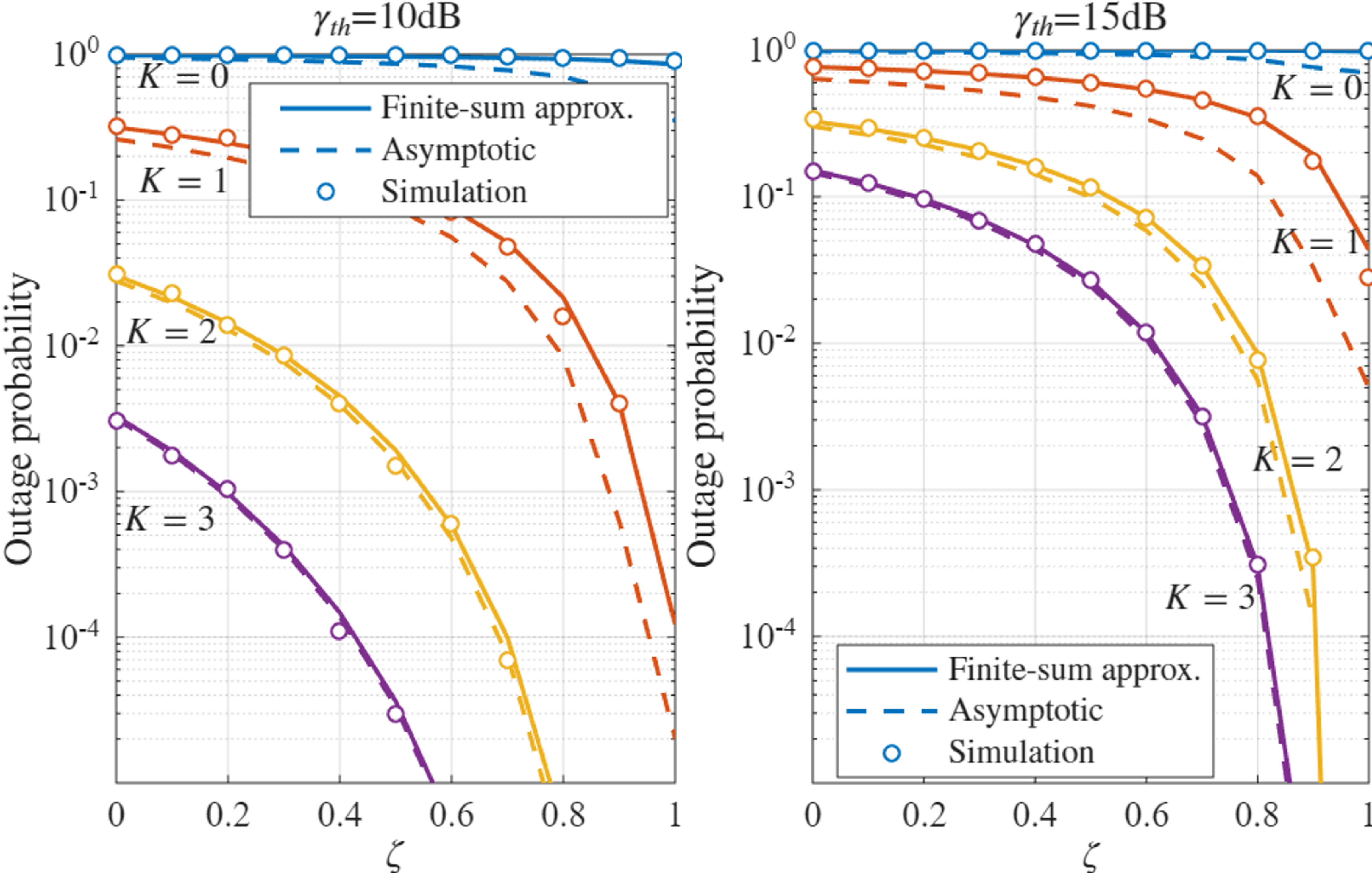}}
	\vspace{-15pt}\caption{Outage Probability versus $\zeta$ corresponding to different $K$ with $m=0.2, m_s=1.1$ and $P_s=15$ dBm under $\gamma_{\text{th}}=10$  dB and $\gamma_{\text{th}} = 15$ dB, respectively.}\vspace{-15pt}
	\label{OP_zeta_K}
\end{figure}
Figure~\ref{OP_zeta_K} shows the finite-sum approximate and asymptotic OPs versus $\zeta$ for different $K$ and outage thresholds under $1<m_s\leq 2$. A common trend in both subfigures is that the OP decreases with $\zeta$, suggesting that improved residual-error exploitability can enhance outage robustness. 
In addition, increasing $K$ brings considerable performance gain compared with the direct-link baseline, indicating that a richer reflected-link structure improves both the effective link quality and the outage robustness. 
Moreover, a similar discrepancy appears when $\gamma_{\text{th}}$ is smaller, the reason of which is essentially similar to that in Fig.~\ref{OP_Ps_N}: a smaller $\gamma_{\text{th}}$ drives the system into a very-low-outage regime. 
Therefore, the proposed model should be interpreted as an efficient tool for trend analysis or parameter screening, while highly accurate prediction in the deep-tail region may require further tail-specific refinement or system parameter adjustment to avoid inaccurate prediction region.

\begin{figure}
	\centering
	{\includegraphics[width=0.95\columnwidth]{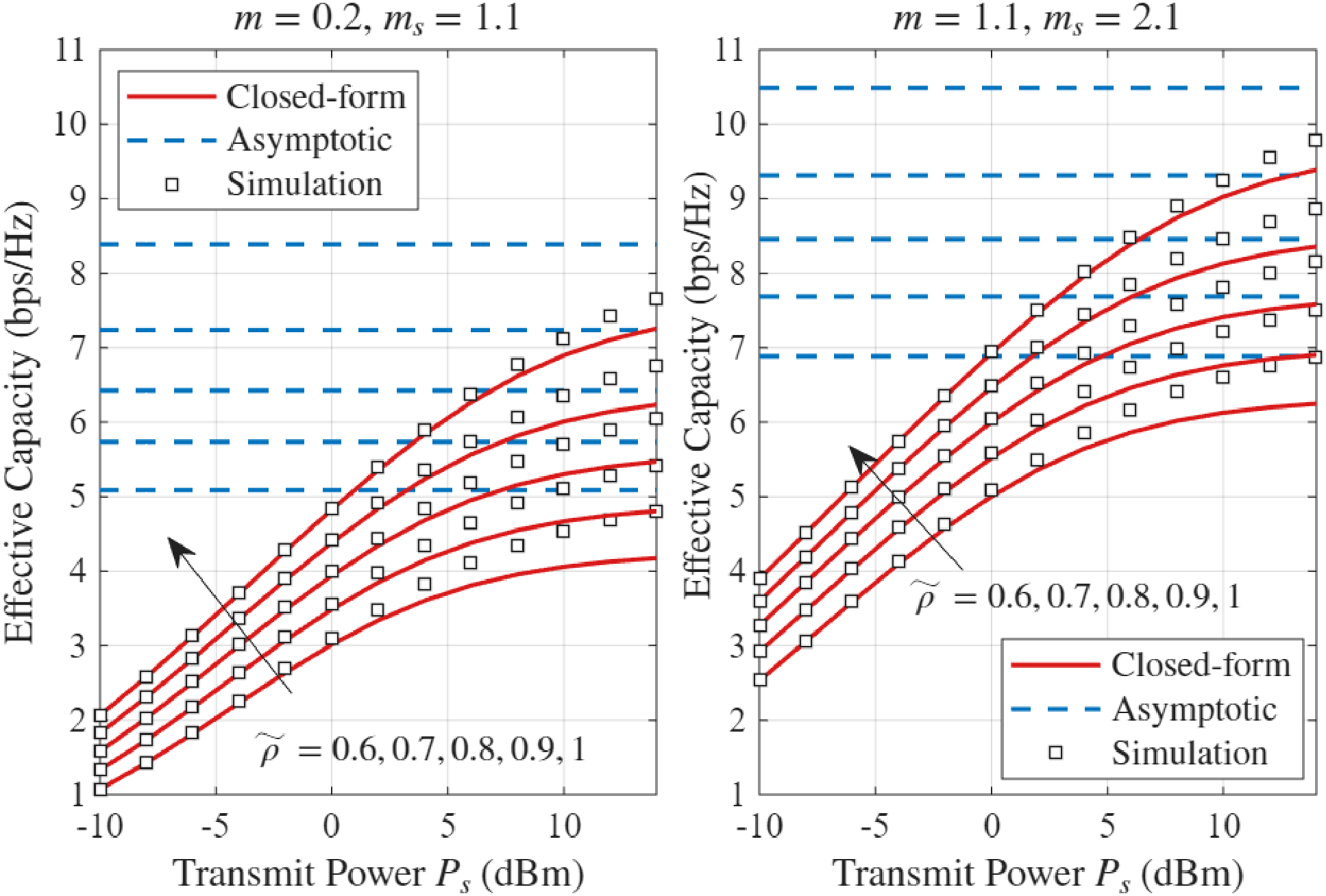}}
	\vspace{-0.2cm}\caption{Effective capacity versus $P_s$ corresponding to different estimated correlation coefficients 
	$\widetilde{\rho}$ with $\beta = 1e^{-3}$ for SEM.}\vspace{-10pt}
	\label{EC_Ps_De}
\end{figure}

Figure~\ref{EC_Ps_De} depicts the EC versus $P_s$ for different estimated correlation coefficients $\widetilde{\rho}$ under $\zeta=0$, including the closed-form EC of the SEM, the asymptotic EC of the EBM, and simulations. 
In both subfigures, the SEM curves match the simulation results well in the low-SNR region, which suggests that the SEM provides an accurate and tractable approximation before the residual CSI error becomes dominant. As $P_s$ increases, the simulation curves gradually deviate from the SEM curves and move toward the horizontal asymptotic EBM curve, revealing an EC ceiling under fixed CSI errors. 
It is also observed that a larger $\widetilde{\rho}$ significantly improves the EC and shifts the asymptotic ceiling upward, indicating that better CSI quality improves performance not only by strengthening the useful term but also by reducing the unexplainable residual impairment. These observations suggest that enhancing channel estimation accuracy yields more remarkable performance gain than merely increasing $P_s$ in high-SNR regimes.

\begin{figure}
	\centering
	{\includegraphics[width=0.93\columnwidth]{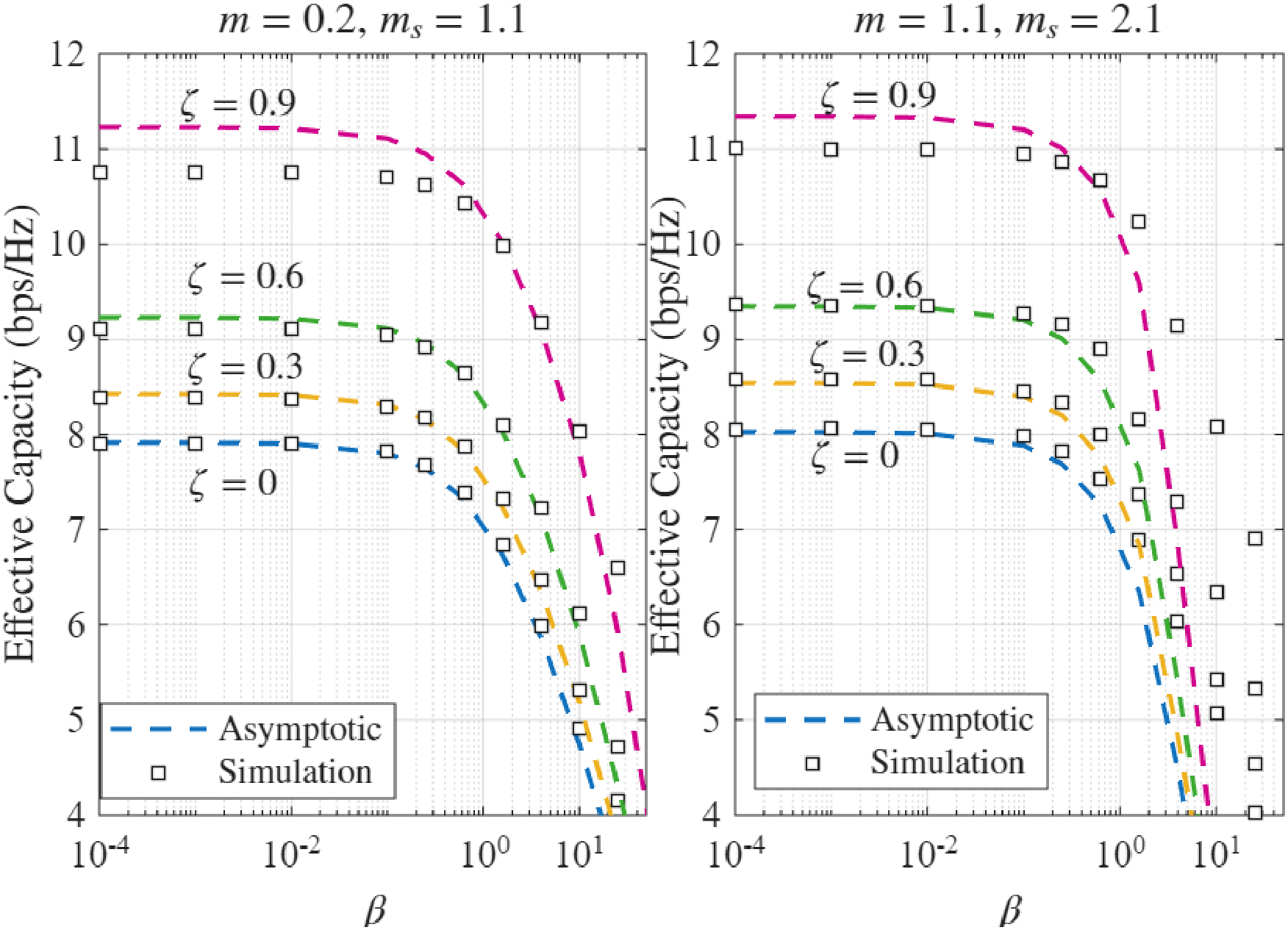}}
	\vspace{-5pt}\caption{Effective capacity versus $\beta$ corresponding to different $\zeta$ for $\zeta$-Model with $P_s  = 15$ dBm.}\vspace{-10pt}
	\label{EC_beta_zeta}
\end{figure}

Figure~\ref{EC_beta_zeta} shows the EC versus the normalized QoS exponent $\beta$ for different $\zeta$ under two fading/shadowing conditions. In both subfigures, the EC increases monotonically with $\zeta$ and decreases monotonically with $\beta$, which is consistent with the fact that a larger $\beta$ imposes a more stringent statistical delay requirement. 
Additionally, the asymptotic results agree well with the simulations in the small/moderate-$\beta$ region, but the mismatch occurs as $\beta$ and $\zeta$ further increase, which is consistent with \textit{Remark} \ref{remarkEC}. 
Moreover, the fitting accuracy with a large $\beta$ of the left subfigure is better than the right subfigure, the reason of which is analogous to previous discussion in Fig.~\ref{OP_Ps_N}. 
Overall, the proposed asymptotic analysis still captures the correct monotonic trends and performance ordering with respect to both $\beta$ and $\zeta$. 
From a design perspective, improving receiver-side residual-error exploitation is an effective approach to enhance delay-constrained throughput, especially when the QoS constraint is not excessively stringent, while the large-$\beta$ region may require more tail-refined evaluation for highly accurate numerical optimization.

\section{Conclusions}
In this paper, we provided a comprehensive analytical framework for emergency heterogeneous UAV systems under composite Fisher-Snedecor $\mathcal{F}$ fading and practical impairments. To resolve the inconsistencies in existing CSI error modeling, a unified SNR framework, termed the $\zeta$-Model, was proposed to reveal the inherent mechanism between different boundaries, including ITM, EBM, and SEM. 
Based on gamma moment-matching, we derived closed-form or finite-sum approximate expressions, together with asymptotic characterizations for key performance metrics. 
The simulation and analytical results highlighted the importance of jointly optimizing CSI accuracy, RIS quantization resolution, and array size to achieve a balanced trade-off between reliability, complexity, and energy efficiency in realistic EWC deployments.

\begin{appendices}
\section{Derivation of the expectation for $n$-th moment of $|U|$}\label{AppendixA}
\subsubsection{ $m_s >2$ } We start from the case of $m_s>2$ to derive the distribution of $|U|^2$. 
First, we need to calculate the expectation and variance of second and fourth moment of $|U|$. The second moment of $|U|$ can be calculated as
\begin{equation}
    \begin{aligned}
    |U|^2&=|X|^2+|E_\zeta|^2+2\operatorname{Re}(X{E_\zeta^\ast}).
    \end{aligned}\label{U2}
\end{equation}
We adopt the moment-level closure that \(X\) is independent of $E_\zeta$ to make the Gamma moment-matching approximation tractable. 
Accordingly, $\mathbb{E}[\operatorname{Re}(X{E_\zeta^\ast})]=0$ 
and $\mathbb{E}[|U|^2]$ can be expressed as
\begin{equation}\label{EU2}
    \begin{aligned}
    \mathbb{E}[|U|^2]=\mathbb{E}[|X|^2]+\mathbb{E}[|E_\zeta|^2],
    \end{aligned}
\end{equation}
where $\mathbb{E}[|X|^2]$ can be calculated as
\begin{equation}\label{EX2}
    \begin{aligned}
        &\mathbb{E}[|X|^2]=\mathbb{E}[(D+\operatorname{Re}(A))^2+\operatorname{Im}(A)^2]\\&=\mathbb{E}[D^2]+\mathbb{E}[\operatorname{Re}(A)^2]+\mathbb{E}[\operatorname{Im}(A)^2]+2\mathbb{E}[D]\mathbb{E}[\operatorname{Re}(A)].
    \end{aligned}
\end{equation}
Taking the square of Eq.~\eqref{U2} and then computing its expectation yields the expectation of  $|U|^4$, denoted by $\mathbb{E}[|U|^4]$, as follows:
\begin{equation}
 \begin{aligned}
    \mathbb{E}[|U|^4]&=\mathbb{E}\left[(|X|^2+|E_\zeta|^2+2\operatorname{Re}(X{{E_\zeta^\ast}}))^2\right]\\
    &\overset{(a)}{=}\mathbb{E}[|X|^4]+\mathbb{E}[|E_\zeta|^4]+4\mathbb{E}[|X|^2]\mathbb{E}[|{E_\zeta}|^2],
    \end{aligned}
\end{equation}
where $(a)$ follows from $\mathbb{E}[\operatorname{Re}(X{E_\zeta^\ast})^2]=\mathbb{E}[|X|^2]\mathbb{E}[|{E_\zeta}|^2]/2$. Here, $\mathbb{E}[|X|^4]$ can be calculated as
\begin{equation}\label{EX4}
    \begin{aligned}
&\mathbb{E}[|X|^4]=\mathbb{E}[(\operatorname{Re}(A)+D)^2+\operatorname{Im}(A)^2]^2\\
&=\mathbb{E}[\operatorname{Re}(A)^4]+4\mathbb{E}[\operatorname{Re}(A)^3]\mathbb{E}[D]+6\mathbb{E}[\operatorname{Re}(A)^2]\mathbb{E}[D^2]\\&+4\mathbb{E}[\operatorname{Re}(A)]\mathbb{E}[D^3]+\mathbb{E}[D^4]+2\mathbb{E}[\operatorname{Im}(A)^2]\left(\mathbb{E}[\operatorname{Re}(A)^2]\right.\\&\left.+{2\mathbb{E}[\operatorname{Re}(A)]\mathbb{E}[D]}+\mathbb{E}[D^2]\right)+\mathbb{E}[\operatorname{Im}(A)^4].\end{aligned}
\end{equation}
According to the statistical properties of CSCG RVs, for $x \sim \mathcal{CN}(0, \sigma^2)$, the expectation of the $n$-th order moment of $|x|$ 
is given by
\begin{equation}\label{E|X|}
\mathbb{E}\left[|x|^n\right] =\sigma^n\Gamma(n/2+1).
\end{equation}
Consequently, based on the \emph{Remark 1} and the definition of $\zeta$ in Eq.~\eqref{zeta_def}, 
$\mathbb{E}[|E_\zeta|^2]$ and $\mathbb{E}[|E_\zeta|^4]$ can be directly obtained by substituting $\sigma_E^2$ into Eq.~\eqref{E|X|} as follows:
\begin{equation}
\begin{cases}
\begin{aligned}
&\mathbb{E}[|E|^2] =\sigma_E^2, ~~~~~~~~~~~~~~\mathbb{E}[|E|^4]=\sigma_E^4\Gamma(3)=2\sigma_E^4,\\
&\mathbb{E}[|E_\zeta|^2] =\zeta\sigma_E^2=\sigma_{E_\zeta}^2, ~~~\mathbb{E}[|E_\zeta|^4]=\sigma_{E_\zeta}^4\Gamma(3)=2 \zeta^2\sigma_E^4.
\end{aligned}
\end{cases}\label{mean_nth_order_G_O}  
\end{equation}

Based on \cite[Eqs. (14) and (15)]{tcomF}, for the $i$-th moment of a RV $h$ following Fisher-Snedecor $\mathcal{F}$ distribution with parameters $\left\{m, m_s\right\}$, its expected value can be derived by substituting its PDF into \cite[Eq. (3.194.3)]{2007Table}, yielding
\begin{equation}
    \begin{aligned}
        \mathbb{E}[|h|^i] & =\left(\frac{m_{s}\Omega_{m}}{m\Omega_{s}}\right)^{\frac{i}{2}}\frac{B\left(m+\frac{i}{2},m_{s}-\frac{i}{2}\right)}{B(m,m_{s})}.
    \end{aligned}\label{Eh}
\end{equation} 
{Due to the convergence condition of beta function, $m_s > 2$ is required to exist as an actual integral for $\mathbb{E}[|h|^4]$.} 
By substituting parameter set $\left\{m^d, m^d_s\right\}$ into Eq.~\eqref{Eh}, $\mathbb{E}[|\widetilde{h}_d|^{i}]$ can be directly calculated, yielding the expectation of the $i$-th moment with respect to $D$ as follows:
\begin{equation}
	\mathbb{E}[D^i]=\left(\frac{\widehat{\rho}_{d}\widetilde{\rho}_{d}}{\sqrt{PL_d}}\right)^i\left(\frac{m_{s}^d\Omega_{m}^d}{m^d\Omega_{s}^d}\right)^{\frac{i}{2}}\frac{B\left(m^d+\frac{i}{2},m_{s}^d-\frac{i}{2}\right)}{B(m^d,m_{s}^d)}.
\end{equation}

Since $|\widetilde{\chi}_{k,n}|$ and $\theta_{k,n}$ are mutually independent, we have
$\mathbb{E}[\operatorname{Re}(A)]=\sum_{k}^{K}\sum_{n}^{N}\mathbb{E}[\operatorname{Re}(A_{k,n})]$ and $\mathbb{E}[\operatorname{Im}(A)]=\sum_{k}^{K}\sum_{n}^{N}\mathbb{E}[\operatorname{Im}(A_{k,n})]$, where 
\begin{equation}\label{A_kn}
\begin{cases}
    \mathbb{E}[\operatorname{Re}(A_{k,n})]=\frac{\widehat{\rho}_{k,n}\widetilde{\rho}_{k,n}}{\sqrt{PL_{k,n}}}\mathbb{E}[|\widetilde{\chi}_{k,n}|]\mathbb{E}[\text{cos}(\theta_{k,n})],\\
    \mathbb{E}[\operatorname{Im}(A_{k,n})]=\frac{\widehat{\rho}_{k,n}\widetilde{\rho}_{k,n}}{\sqrt{PL_{k,n}}}\mathbb{E}[|\widetilde{\chi}_{k,n}|]\mathbb{E}[\text{sin}(\theta_{k,n})].\\
    \end{cases}  \vspace{-3pt} 
\end{equation}
Then, $\mathbb{E}[\operatorname{Re}(A_{k,n})^2]$ and $\mathbb{E}[\operatorname{Im}(A_{k,n})^2]$ can be expressed as 
\begin{equation}\label{A_kn2}
\begin{cases}
    \mathbb{E}[\operatorname{Re}(A_{k,n})^2]=\frac{\widehat{\rho}_{k,n}^2\widetilde{\rho}_{k,n}^2}{PL_{k,n}}\mathbb{E}[|\widetilde{\chi}_{k,n}|^2]\mathbb{E}[\text{cos}^2(\theta_{k,n})],\\
    \mathbb{E}[\operatorname{Im}(A_{k,n})^2]=\frac{\widehat{\rho}_{k,n}^2\widetilde{\rho}_{k,n}^2}{PL_{k,n}}\mathbb{E}[|\widetilde{\chi}_{k,n}|^2]\mathbb{E}[\text{sin}^2(\theta_{k,n})].
\end{cases}   \vspace{-3pt}
\end{equation}
Similarly, $\mathbb{E}[|\widetilde{\chi}_{k,n}|^i]$ can be obtained by substituting parameter set $\left\{m^{k,n}, m^{k,n}_s\right\}$ into Eq.~\eqref{Eh}. 
Since $\theta_{k,n}$ follows the uniform distribution, we have
\begin{equation}
\begin{cases}
   \mathbb{E}[\sin(\theta_{k,n})] =0,~\mathbb{E}[\cos(\theta_{k,n})] = \frac{2^{q_k}}{\pi}\sin(\frac{\pi}{2^{q_k}}),\\
   \mathbb{E}[\cos^2(\theta_{k,n})] = \frac{1}{2} + \frac{2^{q_k-2}}{\pi}\sin\left( \frac{\pi}{2^{q_k-1}} \right),\\
   \mathbb{E}[\sin^2(\theta_{k,n})] = \frac{1}{2} - \frac{2^{q_k-2}}{\pi}\sin\left( \frac{\pi}{2^{q_k-1}}\right).\\
\end{cases} \label{theta}  
\end{equation}
Therefore, the expectations and variances of $\operatorname{Re}(A_{k,n})$ and $\operatorname{Im}(A_{k,n})$, i.e., $\mathbb{E}[\operatorname{Re}(A_{k,n})]$, $\mathbb{E}[\operatorname{Im}(A_{k,n})]$, $\sigma_{\operatorname{Re}(A_{k,n})}^2=\mathbb{E}[\operatorname{Re}(A_{k,n})^2]-\mathbb{E}[\operatorname{Re}(A_{k,n})]^2$, and $\sigma_{\operatorname{Im}(A_{k,n})}^2=\mathbb{E}[\operatorname{Im}(A_{k,n})^2]-\mathbb{E}[\operatorname{Im}(A_{k,n})]^2$, can be directly obtained by substituting Eq.~\eqref{theta} into Eqs.~\eqref{A_kn} and \eqref{A_kn2}.

For a large number of RIS elements $N$, the CLT approximation can be applied to approximate  $\operatorname{Re}(A_k)=\sum_{n}^{N}\operatorname{Re}(A_{k,n})$ and $\operatorname{Im}(A_k)=\sum_{n}^{N}\operatorname{Im}(A_{k,n})$ as Gaussian RVs. 
Assuming that \(A_{k,n}\) are mutually independent across RIS elements and R-UAVs, we have $
\operatorname{Re}(A_k)
\sim
\mathcal{N}\left(
\sum_{n=1}^{N}\mu_{\operatorname{Re}(A_{k,n})},
\sum_{n=1}^{N}\sigma_{\operatorname{Re}(A_{k,n})}^2
\right)$ and $\operatorname{Im}(A_k)
\sim
\mathcal{N}\left(
\sum_{n}^{N}\mu_{\operatorname{Im}(A_{k,n})},
\sum_{n}^{N}\sigma_{\operatorname{Im}(A_{k,n})}^2
\right)$. 
Accordingly, the expectations and variances of $\operatorname{Re}(A)$ and $\operatorname{Im}(A)$ can be expressed as follows:
\begin{equation}\label{mean_varA}
\begin{cases}\!
 \mu_{\operatorname{Re}(A)}\!=\!\sum_{k}^{K}\!\sum_{n}^{N}\mu_{\operatorname{Re}(A_{k,n})},~\mu_{\operatorname{Im}(A)}\!=\!\sum_{k}^{K}\!\sum_{n}^{N}\mu_{\operatorname{Im}(A_{k,n})},\\
\!\sigma_{\operatorname{Re}(A)}^2\!=\!\sum_{k}^{K}\!\sum_{n}^{N}\sigma_{\operatorname{Re}(A_{k,n})}^2,~ \sigma_{\operatorname{Im}(A)}^2\!=\!\sum_{k}^{K}\!\sum_{n}^{N}\sigma_{\operatorname{Im}(A_{k,n})}^2.
\end{cases}  
\end{equation}
Based on the properties of Gaussian RVs, we can calculate the expectations of the second, third, and fourth order moment of $\operatorname{Re}(A)$ and $\operatorname{Im}(A)$, respectively, as follows:
\begin{equation}
\begin{cases}
 \mathbb{E}[\operatorname{Re}(A)^2]=\mu_{\operatorname{Re}(A)}^2+\sigma_{\operatorname{Re}(A)}^2,\\\mathbb{E}[\operatorname{Re}(A)^3]=\mu_{\operatorname{Re}(A)}^3+3\mu_{\operatorname{Re}(A)}\sigma_{\operatorname{Re}(A)}^2,\\
 \mathbb{E}[\operatorname{Re}(A)^4]=\mu_{\operatorname{Re}(A)}^4+6\mu_{\operatorname{Re}(A)}^2\sigma_{\operatorname{Re}(A)}^2+3\sigma_{\operatorname{Re}(A)}^4,
\end{cases}\label{mean_var_nth_order_Re}  
\end{equation}
and
\begin{equation}
\begin{cases}
  \mathbb{E}[\operatorname{Im}(A)]=0,~~~~~~\mathbb{E}[\operatorname{Im}(A)^2]=\sigma_{\operatorname{Im}(A)}^2,\\\mathbb{E}[\operatorname{Im}(A)^3]=0,~~~~~
 \mathbb{E}[\operatorname{Im}(A)^4]=3\sigma_{\operatorname{Im}(A)}^4.
\end{cases}\label{mean_var_nth_order_Im}  
\end{equation}
Therefore, by substituting the above expectation results of the variables, $\mathbb{E}[|U|^2]$ and $\mathbb{E}[|U|^4]$ can be calculated, thus leading to the results of $\mu_{|U|^2}$ and $\sigma_{|U|^2}^2$.

\subsubsection{$1<m_s\leq 2$} The first-order moment \(\mathbb{E}[|U|]\) is required under the circumstances. We first define $R_\zeta=A+E_\zeta$. Since \(\operatorname{Re}(A)\) and \(\operatorname{Im}(A)\) are approximated as independent Gaussian RVs via the CLT and \(E_\zeta\sim\mathcal{CN}(0,\zeta\sigma_E^2)\), the equivalent real-valued vector representation of \(R_\zeta\), denoted by \(\mathbf r_\zeta\), can be expressed as
$\mathbf r_\zeta
\triangleq
\begin{bmatrix}
\operatorname{Re}(R_\zeta)\\
\operatorname{Im}(R_\zeta)
\end{bmatrix}
\sim
\mathcal N\!\left(
\boldsymbol{\mu}_{R_\zeta},
\mathbf C_{R_\zeta}
\right)$,
where the mean vector and covariance matrix are respectively given by 
$\boldsymbol{\mu}_{R_\zeta}
=
\begin{bmatrix}
\mu_{\operatorname{Re}(A)}\\
0
\end{bmatrix}$ and 
$\mathbf C_{R_\zeta}
=
\begin{bmatrix}
\sigma_{x,\zeta}^2 & 0\\
0 & \sigma_{y,\zeta}^2
\end{bmatrix}$,
with
$\sigma_{x,\zeta}^2 \triangleq \sigma_{\operatorname{Re}(A)}^2+\frac{\zeta\sigma_E^2}{2}$ and $
\sigma_{y,\zeta}^2 \triangleq \sigma_{\operatorname{Im}(A)}^2+\frac{\zeta\sigma_E^2}{2}$. 
Therefore, $\mathbb{E}[|U|]$ can be rewritten in a conditional form as follows:
\begin{equation}
\begin{aligned}
&\mathbb{E}[|U|]
=\mathbb E[|D+R_\zeta|] =
\int_{0}^{\infty}
\mathbb E[|U|\,|\,D=d]\,
f_D(d)\,dd \\
\end{aligned}\label{U}
\end{equation}
where
\begin{equation}
	\begin{aligned}
&\mathbb{E}[|U| \mid D=d] \overset{(a)}{=} \int_{-\infty}^{\infty} \int_{-\infty}^{\infty}\frac{\sqrt{x^2 + y^2} }{2\pi \sigma_{x,\zeta} \sigma_{y,\zeta}} \\&\exp\left(-\frac{(x - (d + \mu_{\operatorname{Re}(A)}))^2}{2\sigma_{x,\zeta}^2}\right) \exp\left(-\frac{y^2}{2\sigma_{y,\zeta}^2}\right) dx dy\\
&\overset{(b)}{\approx}
\frac{1}{\pi}
\sum_{i=1}^{N_h}\sum_{j=1}^{N_h}w_i w_j
\sqrt{
\left(d+\mu_{\operatorname{Re}(A)}+\sqrt{2}\sigma_{x,\zeta}x_i\right)^2
+
2\sigma_{y,\zeta}^2 x_j^2
}.
\label{EU_conditional2}
	\end{aligned}
\end{equation}
Since Eq.~\eqref{EU_conditional2} does not admit a simple closed-form expression, here we employ the Gauss-Hermite quadrature to obtain a highly accurate approximation. 
Specifically, for
\(
\int_{-\infty}^{\infty} e^{-t^2} g(t)\,dt
\),
we have
$\int_{-\infty}^{\infty} e^{-t^2} g(t)\,dt
\approx
\sum_{i=1}^{N_h} w_i g(x_i)$, where \(x_i\) and \(w_i\) denote the \(i\)-th Hermite node and weight, respectively, and \(N_h\) is the truncation order. 
Therefore, by applying the variable substitutions $x = \sqrt{2}\sigma_{x,\zeta}u + d + \mu_{\operatorname{Re}(A)}$ and $y = \sqrt{2}\sigma_{y,\zeta}v$, the conditional expectation can be reformulated and tightly approximated as shown in $(b)$,
where \(x_i\) and \(w_i\) denote the \(i\)-th Hermite abscissa and weight, respectively.

Next, we set $C_d = {\widehat{\rho}_{d}\widetilde{\rho}_{d}}/{\sqrt{PL_d}}$ such that $D = C_d |\widetilde{h}_d|$. By introducing the variable substitution $t = {m^d \Omega_s^d {|\widetilde{h}_d|}^2}/({m^d \Omega_s^d {|\widetilde{h}_d|}^2 + m_s^d \Omega_m^d})$, the semi-infinite integration domain $[0, \infty)$ is mapped to a finite interval $[0, 1]$, hence converting \(f_D(d)\,dd\) into the Beta-weighted form as \vspace{-5pt}
\begin{equation}\label{fdd}
f_D(d)\,dd=f_{|\widetilde h_d|}(h)\,dh=
	\frac{t^{m^d-1}(1-t)^{m_s^d-1}}{B(m^d,m_s^d)}\, d t.\vspace{-5pt}
\end{equation} 
For the Gauss-Jacobi quadrature, the integral of a function $f(z)$ weighted by $(1-z)^\alpha(1+z)^\beta$ over the interval $[-1, 1]$ can be approximated as $\int_{-1}^{1} (1-z)^\alpha (1+z)^\beta f(z) dz \approx \sum_{k=1}^{N_J} \omega_k f(z_k)$, where $z_k$ and $\omega_k$ denote the $k$-th Jacobi node and weight, respectively, corresponding to the parameters $\alpha$ and $\beta$, and $N_J$ is the truncation order. 
Therefore, by substituting Eq.~\eqref{fdd} into Eq.~\eqref{U} and applying the linear mapping $z = 2y - 1$, the integration domain is converted to $[-1, 1]$, which exactly matches the standard form of the Gauss-Jacobi quadrature $\int_{-1}^1 f(z)(1-z)^{m_s^d - 1} (1+z)^{m^d - 1} dz$. 
Therefore, 
$\mathbb{E}[|U|]$ can be tightly approximated as follows:\vspace{-5pt}
\begin{equation}
\begin{aligned}
    \mathbb{E}[|U|] &\approx \frac{1}{\pi 2^{m^d+m_s^d-1} B(m^d, m_{s}^d)} \sum_{k=1}^{N_J} \sum_{i=1}^{N_h} \sum_{j=1}^{N_h} \omega_k w_i w_j \\
    &\quad \times \sqrt{\left(\sqrt{2}\sigma_{x,\zeta}x_i + \Xi(z_k) + \mu_{\operatorname{Re}(A)}\right)^2 + 2\sigma_{y,\zeta}^2 x_j^2},
\end{aligned}\label{EU}
\end{equation}
where $\Xi(z_k) = C_d \sqrt{\frac{m_s^d \Omega_m^d}{m^d \Omega_s^d}\frac{z_k+1}{1-z_k}}$ is defined for simplicity. 
Accordingly, \(\mathbb E[|U|]\) is reduced to a finite triple summation, which can be efficiently evaluated in standard numerical software such as MATLAB. Since $\mathbb{E}[|U|^2]$ has been obtained from Eqs.~\eqref{EU2} and \eqref{EX2}, the results of $\mu_{|U|}$ and $\sigma_{|U|}^2$ can be obtained, which completes the proof.

\vspace{-15pt}
\section{Derivation of the CDF of $\gamma_\zeta$}\label{AppendixB}\vspace{-10pt}
For $m_s>2$, we start from the complementary cumulative distribution function (CCDF) of $\gamma_\zeta$, which is denoted by $\widetilde{F}_{\gamma_\zeta}(\gamma)$ and given by \vspace{-5pt}
\begin{equation}\label{CDFgammazeta_1}
	\begin{aligned}
		\widetilde{F}_{\gamma_\zeta}(\gamma)& = 1-{F}_{\gamma_\zeta}(\gamma)=\text{Pr}\left[|U|^2 > \left(|\varepsilon_\zeta|^2+\frac{1}{\bar{\gamma}}\right)\gamma\right]\\
		&= \int_{0}^\infty \widetilde{F}_{|U|^2}\left(\gamma \varepsilon+\frac{\gamma}{\bar{\gamma}}\right)f_{\scriptscriptstyle |\varepsilon_\zeta|^2}(\varepsilon) d\varepsilon,\vspace{-10pt}
	\end{aligned}
\end{equation}
where $\widetilde{F}_{|U|^2}(u)$ denotes the CCDF of $|U|^2$. According to \cite[Eq. (8.356.3)]{2007Table}, we have  $\widetilde{F}_{|U|^2}(u)={\Gamma(k_{u_2},{u}/{\theta_{u_2}})}/{\Gamma(k_{u_2})}$, where $\Gamma(\alpha, x) = \int_{x}^{\infty} \exp(-t) t^{\alpha-1} \, dt$ is the upper incomplete gamma function\cite[Eq. (8.350.2)]{2007Table}.  
Then substituting Eq.~\eqref{PDF_varepsilon} into Eq.~\eqref{CDFgammazeta_1}, the CCDF can be rewritten as follows:\vspace{-5pt}
\begin{equation}\label{CDFgammazeta_2}
	\begin{aligned}
		\widetilde{F}_{\gamma_\zeta}(\gamma)& =\frac{1}{\Gamma(k_{u_2})\sigma_{\varepsilon_\zeta}^2} \times\\&\int_{0}^\infty \left(\int_{\frac{\gamma \varepsilon}{\theta_{u_2}}+\frac{\gamma}{\bar{\gamma}\theta_{u_2}}}^\infty \exp(-t)t^{k_{u_2}-1}dt\right) 
		\exp\left(-\frac{\varepsilon}{\sigma_{\varepsilon_\zeta}^2}\right)d\varepsilon.\vspace{-5pt}
	\end{aligned}
\end{equation}
After exchanging the integration order, the original integral range becomes $t\ge {\gamma}/{(\bar{\gamma}\theta_{u_2})}$ with $0\le \varepsilon \le{\theta_{u_2} t}/{\gamma}-{1}/{\bar{\gamma}}$. The resulting double integral $I$ can then be calculated as follows:\vspace{-5pt}
\begin{equation}\label{CDFgammazeta_3}
	\begin{aligned}
		I &\overset{(a)}{=} \int_\frac{\gamma }{\bar{\gamma}\theta_{u_2}}^{\infty}\left[\int_{0}^{\frac{\theta_{u_2} t}{\gamma}-\frac{1}{\bar{\gamma}}} \exp\left(-\frac{\varepsilon}{\sigma_{\varepsilon_\zeta}^2}\right) d\varepsilon \right] \exp(-t)t^{k_{u_2}-1} dt\\
		&\overset{(b)}{=}\sigma_{\varepsilon_\zeta}^2 \left[\int_\frac{\gamma }{\bar{\gamma}\theta_{u_2}}^{\infty} \exp(-t)t^{k_{u_2}-1}\right.\\
		&\left.~~~~~~~~~~- \exp\left(\frac{1}{\bar{\gamma}\sigma_{\varepsilon_\zeta}^2}\right) \exp\left(-t-\frac{\theta_{u_2} t}{\sigma_{\varepsilon_\zeta}^2 \gamma} \right)t^{k_{u_2}-1} dt\right]\\
		&\overset{(c)}{=}\sigma_{\varepsilon_\zeta}^2\left[\Gamma(k_{u_2},\frac{\gamma}{\bar{\gamma}\theta_{u_2}}) \right.\\
		&\left.- \exp\left(\frac{1}{\bar{\gamma}\sigma_{\varepsilon_\zeta}^2}\right)\left(\frac{\sigma_{\varepsilon_\zeta}^2\gamma}{\theta_{u_2}+\sigma_{\varepsilon_\zeta}^2 \gamma}\right)^{k_{u_2}}\Gamma(k_{u_2},\frac{\theta_{u_2}+\sigma_{\varepsilon_\zeta}^2\gamma }{\bar{\gamma}\theta_{u_2}\sigma_{\varepsilon_\zeta}^2})\right], 
	\end{aligned}
\end{equation}
where $(b)$ follows from directly evaluating the inner integral in $(a)$. The first term of $(c)$ can be calculated using \cite[Eq. (8.350.2)]{2007Table}, while the second term can be derived by applying variable substitution as well as \cite[Eq. (8.350.2)]{2007Table}. Finally, substituting Eq.~\eqref{CDFgammazeta_3} into Eq.~ \eqref{CDFgammazeta_2}, and then combining with Eq.~\eqref{CDFgammazeta_1},  the closed-form expression of the CDF of $\gamma_\zeta$ is obtained.

For $1<m_s \leq 2$, ${F}_{\gamma_\zeta}(\gamma)$ can be expressed as follows:
\begin{equation}
	\begin{aligned}
		&{F}_{\gamma_\zeta}(\gamma) =\text{Pr}\left[|U| < \sqrt{\left(|\varepsilon_\zeta|^2+\frac{1}{\bar{\gamma}}\right)\gamma}\right]\\
		&= \int_{0}^\infty {F}_{|U|}\left(\sqrt{\gamma \varepsilon+\frac{\gamma}{\bar{\gamma}}}\right)f_{\scriptscriptstyle |\varepsilon_\zeta|^2}(\varepsilon) d\varepsilon\\
		&\overset{(a)}{=}\frac{1}{\Gamma(k_{u_1})} \int_{0}^\infty \gamma\left(k_{u_1}, \frac{\sqrt{\gamma ( \sigma_{\varepsilon_\zeta}^2 z +\frac{1}{\bar{\gamma}})}}{\theta_{u_1}}\right)\exp\left(-z\right)dz \\
		&\overset{(b)}{\approx}\frac{1}{\Gamma(k_{u_1})} \sum_{i=1}^{N_{q}} w_{i} \gamma\left(k_{u_1}, \frac{\sqrt{\gamma ( \sigma_{\varepsilon_\zeta}^2 y_i +\frac{1}{\bar{\gamma}})}}{\theta_{u_1}}\right),
	\end{aligned}
\end{equation}
where $(a)$ can be derived with the variable change $z=\varepsilon/{\sigma_{\varepsilon_\zeta}^2}$. To overcome this intractability and provide an efficient analytical evaluation, we employ the Gauss-Laguerre quadrature method \cite{Gauss-Laguerre}. Specifically, an integral of the form $\int_{0}^{\infty} e^{-y} f(y) dy$ can be accurately approximated as $\sum_{i=1}^{N_q} w_i f(y_i)$, where $N_q$ denotes the truncation order of the quadrature, $y_i$ is the $i$-th root of the Laguerre polynomial $L_{n}(y)$\cite[Eq.~(8.970.1)]{2007Table}, 
and $w_i$ represents the corresponding weight factor given by $w_i = \frac{y_i}{(N_q + 1)^2 [L_{N_q+1}(y_i)]^2}$. 
Then, $(b)$ can be obtained by applying the Gauss-Laguerre quadrature approximation, which admits a highly tight finite-sum approximation.

\vspace{-10pt}
\section{Derivation of the Asymptotic EC}\label{AppendixC}
With high SNR, the asymptotic expression of $\gamma_\zeta$ for $\zeta$-Model can be obtained by utilizing the approximation $1/\bar{\gamma}\rightarrow 0$, yielding ${\gamma}_\zeta^{\text{asy}}={|U|^2}/{|\varepsilon_\zeta|^2}$. Since $|\varepsilon_\zeta|^2$ follows an exponential distribution, it can be regarded as a special case of the Gamma distribution, i.e., $|\varepsilon_\zeta|^2 \sim \mathrm{Gamma}(1, \sigma_{\varepsilon_\zeta}^2)$. 
For the regime of $m_s > 2$, since we have $|U|^2 \sim \mathrm{Gamma} (k_{u_2}, \theta_{u_2})$, ${\gamma}_\zeta^{\text{asy}}/c\sim\beta^{\prime}(k_{u_2},1)$, where $\beta^{\prime}(\cdot)$ represents the Beta prime distribution\cite{Betaprime} and $c={\theta_{u_2}}/{\sigma_{\varepsilon_\zeta}^2}$ can be obtained with the properties of the Gamma distribution. By exploiting the deterministic transformation between the Beta and Beta-prime distributions,  ${\gamma}_\zeta^{\text{asy}}$ can be rewritten as ${\gamma}_\zeta^{\text{asy}}= {c\xi}/({1-\xi})$, where the RV $\xi$ follows a Beta distribution with parameters $(k_{u_2}, 1)$, denoted by $\xi\sim \mathrm{Beta}(k_{u_2},1)$. By substituting ${\gamma}_\zeta^{\text{asy}}$ into Eq.~\eqref{EC}, the expectation term $\mathcal{E}_{\gamma_{\zeta}^{\text{asy}}}=\mathbb{E}_{\gamma_\zeta^{\text{asy}}}\left[(1+{\gamma_\zeta^{\text{asy}}})^{-\beta} \right]$ 
can be expressed as follows:
\begin{equation}\label{EC2ub1}  
\begin{aligned}
\mathcal{E}_{\gamma_{\zeta}^{\text{asy}}} &=\mathbb{E}\left[\left(1+ \frac{c\xi}{1-\xi}\right)^{-\beta}\right]=\mathbb{E}\left[\left(\frac{1-(1-c)\xi}{1-\xi}\right)^{-\beta}\right].
\end{aligned}
\end{equation}
Substituting the PDF of $\xi$ into Eq.~\eqref{EC2ub1}, we obtain 
\begin{equation}\label{EC2ub2}  
\begin{aligned}
\mathcal{E}_{\gamma_{\zeta}^{\text{asy}}}&=\int_{0}^1 \frac{(1-\xi)^\beta}{[1-(1-c)\xi]^\beta}k_{u_2} \xi^{k_{u_2}-1}d\xi.
\end{aligned}
\end{equation}
According to \cite[Eq.~(9.111)]{2007Table}, Eq.~\eqref{EC2ub2} can be directly calculated as 
\begin{equation}\label{EC2ub3}  
\begin{aligned}
\mathcal{E}_{\gamma_{\zeta}^{\text{asy}}}&=k_{u_2} B(k_{u_2},\beta+1){}_2 F_1\left(\beta,k_{u_2}; k_{u_2}+\beta+1; 1-c\right),
\end{aligned}
\end{equation}
where ${}_2 F_1\left(\alpha, \beta; \gamma; -z\right)$ denotes the hypergeometric function. 

For the regime of $1<m_s \leq 2$, we have $|U| \sim \mathrm{Gamma} (k_{u_1}, \theta_{u_1})$. After substituting $f_{|U|}(u)$ and $f_{|\varepsilon_\zeta|^2}(\varepsilon)$ and applying the change of variables $x = u^2$, the asymptotic EC expectation term can be formulated as follows:
\begin{equation}\label{EC2ub_msleq2_integral}
    \mathcal{E}_{\gamma_{\zeta}^{\text{asy}}} = \int_0^\infty \frac{x^{\frac{k_{u_1}}{2}-1} e^{-\frac{\sqrt{x}}{\theta_{u_1}}}}{2 \theta_{u_1}^{k_{u_1}} \Gamma(k_{u_1})} \left(\int_0^\infty \left(1+\frac{x}{\varepsilon}\right)^{-\beta} \frac{e^{-\frac{\varepsilon}{\sigma_{\varepsilon_\zeta}^2}}}{\sigma_{\varepsilon_\zeta}^2} d\varepsilon \right) dx .
\end{equation}
Similar to the derivation process for $\mathcal{E}_{\gamma_{0}^s}$, we express the binomial term $(1+x/\varepsilon)^{-\beta}$ using the Mellin-Barnes contour integral representation \cite[Eq.~(9.113)]{2007Table} and then exchange the order of integration, which yields 
\begin{equation}
	\begin{aligned}
	&\mathcal{E}_{\gamma_{\zeta}^{\text{asy}}} = \frac{1}{\Gamma(k_{u_1})\Gamma(\beta)}\\ &\times \frac{1}{2\pi i}
	\int_{\mathcal L} \Gamma(s)\Gamma(\beta-s) \Gamma(s+1) \Gamma(k_{u_1}-2s) \left(\frac{\sigma_{\varepsilon_\zeta}^2}{\theta_{u_1}^2}\right)^s ds.
	\end{aligned}
\end{equation}
Then, by applying the Legendre duplication formula\cite[Eq. (8.335.1)]{2007Table} to $\Gamma(k_{u_1}-2s)$ and recasting the resulting contour integral into the standard Meijer's $G$-function form\cite[Eq.~(9.301)]{2007Table}, $\mathcal{E}_{\gamma_{\zeta}^{\text{asy}}}$ is finally obtained as follows:
\begin{equation}\label{EC2ub_msleq2_final}
\mathcal{E}_{\gamma_{\zeta}^{\text{asy}}}
= \frac{2^{k_{u_1}-1}}{\sqrt{\pi}\Gamma(k_{u_1})\Gamma(\beta)}
G_{3,2}^{2,3}\!\left(
\frac{4\theta_{u_1}^2}{\sigma_{\varepsilon_\zeta}^2}
\;\middle|\;
\begin{matrix}
1-\beta,1-\frac{k_{u_1}}{2},\frac{1-k_{u_1}}{2}\\
0, 1
\end{matrix}
\right).
\end{equation}
Finally, by substituting Eqs.~\eqref{EC2ub3} and \eqref{EC2ub_msleq2_final} into Eq.~\eqref{EC}, Eq.~\eqref{EC2ub} can then be obtained, which completes the proof.
\end{appendices}
\bibliography{References}
\bibliographystyle{IEEEtran}
\end{document}